\newtheorem{Prob}{Problem}
\newtheorem{theorem}{Theorem}
\begin{document}
\title{Energy Optimization of Multi-task DNN Inference in MEC-assisted XR Devices: A Lyapunov-Guided Reinforcement Learning Approach}


\author{Yanzan Sun,~\IEEEmembership{Member,~IEEE}, Jiacheng Qiu, Guangjin Pan, Shugong Xu,~\IEEEmembership{Fellow,~IEEE}, \\
    Shunqing Zhang,~\IEEEmembership{Senior Member,~IEEE}, Xiaoyun Wang, and Shuangfeng Han,~\IEEEmembership{Senior Member,~IEEE}
	\thanks{
		  Y. Sun, J. Qiu, G. Pan, S. Xu, and S. Zhang are with the School of Communication and Information Engineering, Shanghai University, Shanghai 200444, China. Emails: \{yanzansun, qjccc, guangjin\_pan, shugong, shunqing\}@shu.edu.cn. (Corresponding author: Guangjin Pan.)

           X. Wang and S. Han are with China Mobile Research Institute, Beijing 100053, China. Emails: \{wangxiaoyun, hanshuangfeng\}@chinamobile.com

	This work was supported by the National Key Research and Development Program of China under Grant 2022YFB2902005, 2022YFB2902304, and 2022YFB2902002, the National Natural Science Foundation of China (NSFC) under Grants 62071284. }
}
    \maketitle

\markboth{Journal of \LaTeX\ Class Files,~Vol.~XX, No.~XX, January~2024}%
{Shell \MakeLowercase{\textit{et al.}}: Bare Demo of IEEEtran.cls for IEEE Transactions on Magnetics Journals}
%



\IEEEtitleabstractindextext{%
\begin{abstract}

Extended Reality (XR), blending virtual and real worlds, is a key application of future networks. While AI advancements enhance XR capabilities, they also impose significant computational and energy challenges on lightweight XR devices. In this paper, we developed a distributed queue model for multi-task DNN inference, addressing issues of resource competition and queue coupling. In response to the challenges posed by the high energy consumption and limited resources of XR devices, we designed a dual time-scale joint optimization strategy for model partitioning and resource allocation, formulated as a bi-level optimization problem. This strategy aims to minimize the total energy consumption of XR devices while ensuring queue stability and adhering to computational and communication resource constraints. To tackle this problem, we devised a Lyapunov-guided Proximal Policy Optimization algorithm, named LyaPPO. 
Numerical results demonstrate that the LyaPPO algorithm outperforms the baselines, achieving energy conservation of 24.79\% to 46.14\% under varying resource capacities. Specifically, the proposed algorithm reduces energy consumption of XR devices by 24.29\% to 56.62\% compared to baselines algorithms.
\end{abstract}

\begin{IEEEkeywords}
Edge intelligence, collaborative inference, energy efficiency, DNN partitioning, resource allocation, deep reinforcement learning.
\end{IEEEkeywords}}

\maketitle

\IEEEdisplaynontitleabstractindextext

%
\IEEEpeerreviewmaketitle

\section{Introduction}
\IEEEPARstart{T}{he} advent of the Metaverse \cite{t5} has ignited considerable interest in immersive experiences within virtual environments. A cornerstone technology enabling these experiences is Extended Reality (XR), which has rapidly gained prominence as a key 5G media application \cite{t1,XR_1,t3,XR_2}. XR applications strive to elevate user interaction by continuously analyzing user behaviors and environmental contexts through tasks such as gesture recognition \cite{gesture1}, speech recognition \cite{speech1}, and object tracking \cite{ObjectTrackin}. The use of deep neural network (DNN)-based AI algorithms, known for their high precision, has further enhanced these applications, significantly improving the quality of user experiences.

Despite the substantial potential of these AI-driven solutions in XR, they impose notable challenges for XR devices. On one hand, DNN-based algorithms demand intensive computational resources, resulting in high latency and increased energy consumption \cite{intro1, 14}. On the other hand, XR devices, due to their lightweight design, are constrained by limited battery capacity and computational capabilities, restricting the direct implementation of DNN models \cite{t3,t7,t6}. Moreover, the intricate interaction requirements of XR applications often necessitate the simultaneous operation of multiple DNN models, compounding these computational and energy demands.

Mobile Edge Computing (MEC) has emerged as a promising solution to address these limitations. By offloading computationally intensive tasks to edge servers, edge AI offers a viable approach to reduce network load, decrease latency, and lower energy consumption on devices \cite{intro3}. Specifically, DNN model partitioning enables the division of model processing between XR devices and MEC servers \cite{03, 20}. This hybrid processing approach capitalizes on the computational resources of both the device and the network edge, thus enhancing overall system performance.

In response to these considerations, this study aims to optimize the inference of multi-task AI models in XR applications supported by MEC. We propose a Lyapunov-guided reinforcement learning (DRL) approach to minimize energy consumption for DNN inference in MEC-assisted XR systems. The contributions of this paper are as follows.

\begin{itemize}
\item{\textbf{Multi-Task DNN Inference for MEC-assisted XR Applications.}}
In this paper, we consider an edge network architecture for MEC-assisted XR applications focused on multi-task DNN inference. Within this architecture, each XR device is required to handle multiple AI applications simultaneously, corresponding to multi-task DNN inference. For each application, there exists a coupled relationship among the local queues, transmission queues, and MEC queues. These queues are coupled because they share dependencies and participate in the competition for communication and computation resources. We also consider utilizing DNN partitioning techniques to fully leverage the computational capabilities of both XR devices and MEC. Based on this, our aim is to optimize the allocation of system resources and the DNN partition point for each AI application, reducing the energy consumption of XR devices while ensuring the completion of DNN inference tasks.

\item{\textbf{Bi-level Modeling for Dual Time-Scale Optimization.}} 
Since DNN partitioning cannot be reconfigured frequently and requires adjustments on a larger time scale, while resource allocation often occurs on a smaller time scale to ensure system flexibility, this introduces a dual time-scale challenge for system optimization. Therefore, we model the problem as a bi-level optimization problem. In the upper-level optimization, we adjust the DNN partition points in each partition adjustment period. In the lower-level optimization, we jointly optimize the allocation of communication and computation resources in each time slot to ensure queue stability while reducing system energy consumption.

\item{\textbf{Lyapunov-Guided DRL Solution.}}
To solve the proposed bi-level problem, we introduce a Lyapunov-guided Proximal Policy Optimization (LyaPPO) algorithm. For the lower-level optimization problem, based on the Lyapunov optimization method, we first reformulate the problem as a single-slot problem and further decompose it into three sub-problems, i.e., the local computing subproblem, the transmission subproblem, and the edge computing subproblem. For the upper-level optimization problem, we employ the Proximal Policy Optimization (PPO) algorithm to determine the DNN partition points.
\end{itemize}

Simulation results demonstrate that the proposed LyaPPO algorithm reduces XR device energy consumption by 46.14\%, 29.10\%, and 24.79\% under varying local computational resources, maximum transmit power, and edge computational resources, respectively, compared to baseline methods. Our approach dynamically adjusts between local computing and transmission based on energy costs, optimizing performance by balancing resource allocation between the device and MEC.

The remainder of this paper is organized as follows. Section \ref{Sec2} provides an in-depth review of related work. Section \ref{Sec3} outlines the system models, including DNN inference, computation and communication frameworks, and the queue model, followed by the formulation of a dual time-scale bi-level energy optimization problem. Section \ref{Sec4} proposes the LyaPPO algorithm to solve this problem. Numerical simulations and analysis are presented in Section \ref{Sec5}, and the paper concludes in Section \ref{Sec6}.

\section{Related Works} \label{Sec2}
A substantial body of research leverages the computational capacity of MEC to enable edge-assisted collaborative inference. Effective techniques in this domain include computation offloading, model partitioning, etc., which have proven their efficacy in various applications. Specifically, in the field of computation offloading, Fang et al. \cite{15} design TORA-DRL algorithm, which employ DRL to optimize power consumption and alleviate network load. HybridPPO in \cite{19} applies task offloading to reduce latency and energy consumption under specific server constraints.
Bi et al. \cite{01} present the LyDROO framework, which optimally manages data processing through resource allocation and offloading. Wu et al. \cite{07} address stochastic offloading with perturbed Lyapunov optimization to enhance energy efficiency.
Similarly, the work in \cite{33} also uses Lyapunov optimization to minimize energy consumption, and designs an energy efficient dynamic offloading algorithm.
Dai et al. \cite{34} integrate digital twin technology with IIoT networks to model network topology and random task arrivals, proposing an asynchronous actor-critic algorithm to optimize long-term energy efficiency.

Despite the benefits of computation offloading, fully transferring the entire neural network inference tasks to MEC servers often limits the efficient utilization of both device and MEC server resources. Additionally,  the transmission of massive data from devices to MEC server for processing generates large amounts of cross-network traffic, consuming more network resources and increasing energy consumption \cite{29}. Computation offloading faces challenges under dynamic communication conditions, which can impact connectivity and reliability. Therefore, this study identifies DNN model partitioning as a promising alternative for edge collaborative inference. The model partitioning technology strategically segments the DNN models into multiple parts in accordance with its multilayered structure \cite{22}, which allows partial inference tasks to be processed locally before offloading to MEC servers.

Effective queue management is crucial in model partitioning to ensure system efficiency. Several studies incorporate queue mechanisms within model partitioning techniques to address dynamic resource demands. For instance, an M/D/1 queuing model in \cite{04} aims to minimize end-to-end (E2E) latency by jointly optimizing partitioning and resource allocation. COSREL, proposed in \cite{24}, is an online DRL co-scheduling framework that utilizes heterogeneous computing resources (CPUs, GPUs,  DSPs) for concurrent inference of multiple DNN models to enhance throughput, reduce latency, and improve energy efficiency. Ale et al. \cite{31} propose the Dirichlet Deep Deterministic Policy Gradient (D3PG) framework to jointly optimize task partitioning, computational offloading, and computational frequency control, where subtasks are processed sequentially in the queues. 

Lyapunov optimization technique is adopted to further control the stability of the queues while dynamically allocating resources \cite{03, 18, 23, 25}. The work in \cite{03} designs a queue system comprising DNN task load queues and energy queues, enabling online control of task partitioning and offloading to jointly optimize latency and energy consumption. Jiang et al. \cite{18} build an online joint offloading and resource allocation framework (JORA) by employing Lyapunov optimization to create virtual energy queues for reducing latency and energy consumption. Su et al. \cite{23} present the DDPRA algorithm, which combines Lyapunov optimization with reinforcement learning to dynamically partition DNNs and allocate resources, minimizing the long-term average delay under energy constraints. RT-DMP described in \cite{25} optimally balances energy consumption, throughput, and E2E latency through joint DNN partitioning and resource adaptation, addressing the quality of experience (QoE) demands of mobile environments with a virtual queue-based Lyapunov framework. Furthermore, an upsurge interest in energy efficiency can be observed in the literature e.g., \cite{03, 24, 18, 23, 25}. Unfortunately, these studies often overlook the distributed structure of the queues between the device and server sides in the model partitioning scenario, whereas distributed queues require optimization by considering the coupling between them.

Complex scenarios in edge collaborative inference also require support for diverse AI applications across heterogeneous devices. For example, Gao et al. \cite{11} tackle DNN partitioning and resource allocation under multi-user constraints, reducing worst-case latency for real-time applications. 
Yang et al. \cite{32} propose a MEC-based hierarchical machine learning (ML) task distribution framework that aims to minimize total delay, considering constraints such as model complexity, inference error rate, data quality, and available resources. However, studies \cite{03, 04, 23, 11, 32} address scenarios with heterogeneous devices but do not consider the use of multiple DNN models. In contrast, the work in \cite{23} support heterogeneous devices and deploy a different DNN model on each device, enabling a variety of AI services across the scenario. Regrettably, neither approach considers scenarios where multiple DNN models are deployed on a single heterogeneous device.

Additionally, the above references generally do not address the dual time-scale optimization of model partitioning and resource allocation. This optimization approach can significantly reduce the configuration overhead of DNN models when adjusting partition points, thereby improving system efficiency in dynamic environments.

\section{System Model and problem formulation} \label{Sec3}

\begin{figure*}[tb]
\centering 
\includegraphics[scale=0.4]{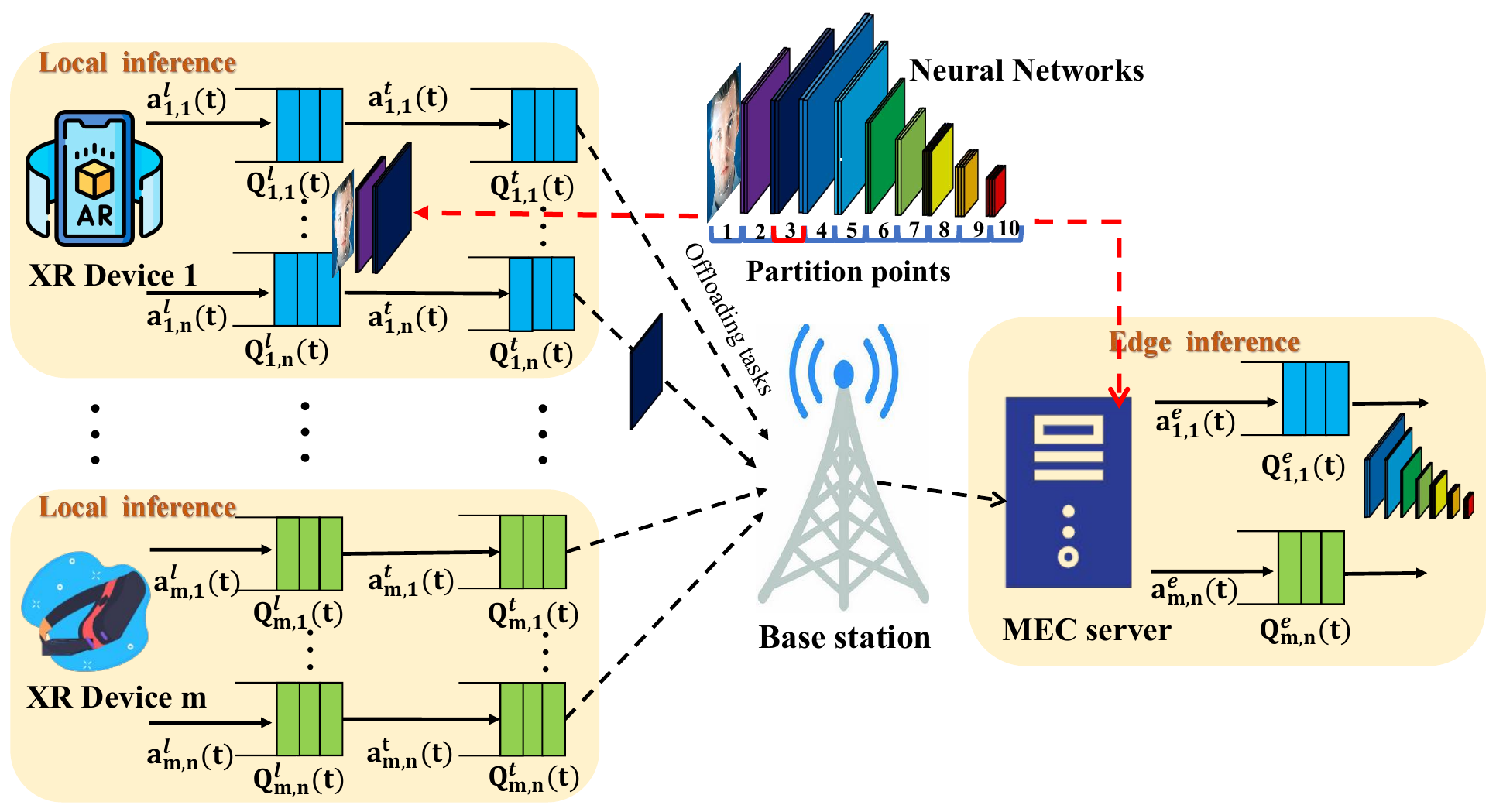}
\caption{MEC-assisted Collaborative Inference Architecture for AI Applications on XR Devices.}
\label{fig-system} 
\end{figure*}

As illustrated in Fig. \ref{fig-system}, we consider a MEC system comprising a single base station (BS) and $M$ XR devices, denoted by the set $\mathcal{M} = \{1, 2, \ldots, M\}$. Each XR device $m \in \mathcal{M}$ hosts various AI services, such as facial recognition \cite{facial, facial2} and gesture recognition \cite{gesture1}. To support these services, each XR device $m$ deploys $N_m$ DNN models, represented by $\mathcal{N}_m = \{1, 2, \ldots, N_m\}$, where each model corresponds to a specific AI task required for the XR application. 

Due to the lightweight design constraints of XR devices, they typically operate under strict energy limitations. To mitigate energy consumption, we consider that XR devices can offload portions of their DNN inference tasks to an MEC server through a process called DNN partitioning \cite{03,20}. Specifically, for each service (i.e., each DNN model) on an XR device, initial inference computation is performed locally. The intermediate results are then transmitted over a wireless network to the MEC server, where the remaining computation is completed. This collaborative approach fully utilizes the computational resources of XR devices and MEC while simultaneously reducing the energy consumption of XR devices. Therefore, for each AI service, DNN inference tasks are generated randomly. Throughout the lifecycle of a task, from initiation to computation completion, tasks traverse through three distinct distributed queues: the local queue, the transmission queue, and the edge queue. The entire execution of each task is divided into the following three steps,
\begin{itemize}
    \item{\textit{Step 1.}} DNN inference tasks are generated and queued in the local queues. In this phase, computation before the model partition point is handled locally by the XR device.

    \item{\textit{Step 2.}} After local inference, the remaining computing task is offloaded to the MEC server. This step involves transmitting intermediate layer features of the task into the transmission queues.
   
    \item{\textit{Step 3.}} The MEC server receives the task in the edge queues, where it processes the remaining computations after the model partition point.
\end{itemize}

To model the temporal dynamics of the AI inference system for XR applications, we segment the time domain into a set of discrete time slots, represented by $\mathcal{T} = \{0, 1, 2, \ldots, T_s\}$, each with a duration of $\tau$. Computational and communication resources are adjustable in each time slot $t_s$. However, frequent adjustments to the model's partitioning decision are impractical due to the significant overhead involved in loading and reconfiguring the model. Therefore, we make adjustments to the partition decisions of the DNN model on a larger timescale. Specifically, we denote $G \tau$ as the duration of each partition adjustment period, and define the partition adjustment intervals within the set $\mathcal{T}_p =\{0,G,2G,\cdots,GT_p\} \subset \mathcal{T}$. For convenience, we define $t_p$ as the partition adjustment period in time slot $t_s G$. A summary of the key notations and definitions used in this model is provided in Table \ref{table symbol summmary}.

\begin{table}[ht]
  \centering
  \caption{SUMMARY OF MAIN NOTATIONS}\label{table symbol summmary}
  \begin{tabularx}{\linewidth}{lX}
    \toprule
    \textbf{Notation} & \textbf{Description} \\
    \midrule
    $M$ & Number of XR devices\\
    $N_m$ & Number of deployed models on device $m$\\
    $m$ & Device index\\
    $n$ & Service index\\
    $\tau$ & Time slot duration\\
    $G$ & Partition points update interval\\
    $t_s$ & Time slot\\
    $t_p$ & Partition adjustment period\\
    $k_{m,n}^{t_p}$ &  Partition points of service $n$ on XR device $m$ in partition adjustment period $t_p$\\
    $c_{m,n}^{local}(t_p)$ & The computational complexity of the inference task on the XR device\\
    $c_{m,n}^{edge}(t_p)$ & The computational complexity of the inference task on the MEC server\\
    $d_{m,n}(t_p)$ & The size of the output feature map transferred from the XR device to the MEC server\\
    $a_{m,n}(t_s)$ & Number of new task arrivals in queues in time slot $t_s$\\
    $a_{m,n}^l(t_s)$ & Number of tasks backlogged in local queues in time slot $t_s$\\
    $a_{m,n}^t(t_s)$ & Number of tasks backlogged in transmission queues in time slot $t_s$\\
    $a_{m,n}^e(t_s)$ & Number of tasks backlogged in edge queues in time slot $t_s$\\
    $F_m^l$ & Maximum local computational resource of XR device $m$\\
    $F^e$ & Maximum edge computational resource of the MEC server\\
    $R_m(t_s)$ & Maximum transmission rate of XR device $m$ in time slot $t_s$\\
    $f_{m,n}^l(t_s)$ & Allocated computational frequency of each local queue  \\
    $r_{m,n}(t_s)$ & Allocated transmission rate of each transmission queue\\
    $p_m(t_s)$ & Allocated transmit power of XR device $m$\\
    $f_{m,n}^e(t_s)$ & Allocated computational frequency of each edge queue\\
    $Q_{m,n}^l(t_s)$ & Workload of local queue for XR device $m$ service $n$ in time slot $t_s$\\
    $Q_{m,n}^t(t_s)$ & Workload of transmission queue for XR device $m$ service $n$ in time slot $t_s$\\
    $Q_{m,n}^e(t_s)$ & Workload of edge queue for XR device $m$ service $n$ in time slot $t_s$\\   
    $h_m(t_s)$ & Channel gain of XR device $m$\\
    \bottomrule
  \end{tabularx}
\end{table}

\subsection{DNN Inference Task Model}
In the AI collaborative inference scenario tailored for XR applications, the AI service requirements of XR devices are addressed through DNN models. We assume that each DNN model for an AI service has a fixed model architecture and a constant input data size. To enable uniform evaluation of the computational complexity associated with DNN inference tasks deployed on XR devices, we utilize the number of multiply-and-accumulate operations (MACs) as a standard metric \cite{05}.

For the service $n$ on device $m$, the total MACs required for completing a single DNN inference task is represented by $C_{m,n}$ (in MACs). The input data size is denoted by $D_{m,n}$ (in bits), while the maximum number of partition layers in the model is represented by $K_{m,n}$. The partition point within the partition adjustment period $t_p$ is denoted by $k_{m,n}^{t_p} \in \mathcal{K}_{m,n} \triangleq \{1, 2, \dots, K_{m,n}\}$.

We denote the proportion of computational complexity required by the layers preceding the partition point $k_{m,n}^{t_p}$ as $c_{m,n}^{k_{m,n}^{t_p}}$, with the remaining proportion expressed as $1 - c_{m,n}^{k_{m,n}^{t_p}}$. These proportions satisfy the ordering $c_{m,n}^0 < c_{m,n}^1 < \dots < c_{m,n}^{K_{m,n}}$, where $c_{m,n}^0 = 0$ signifies complete offloading of the task to the edge server, and $c_{m,n}^{K_{m,n}} = 1$ implies full local execution on the XR device.

Similarly, $d_{m,n}^{k_{m,n}^{t_p}}$ quantifies the ratio of the output feature map size at partition point $k_{m,n}^{t_p}$ to the input data size $D_{m,n}$. Specifically, $d_{m,n}^{K_{m,n}} = 0$ indicates computation of the entire inference task locally on the XR device, while $d_{m,n}^0 = 1$ indicates offloading of the entire inference task to the MEC server. Note that $d_{m,n}^{k_{m,n}^{t_p}}$ can exceed 1, as intermediate layer outputs may surpass the size of the neural network's input data.

Based on these definitions, a comprehensive DNN inference task model can be established to systematically evaluate task workloads from initiation to completion. The computational complexity of the inference task on the XR device is given by ${c}^{local}_{m,n}(t_p) = c_{m,n}^{k_{m,n}^{t_p}}C_{m,n}\rho$, where $\rho$ (in cycles/MAC) is the CPU cycles required for each multiplication and addition operation, which is dependent on the CPU model \cite{05}. The size of the output feature map transferred from the XR device to the MEC server is expressed as ${d}_{m,n}(t_p) = d_{m,n}^{k_{m,n}^{t_p}}D_{m,n}$. Meanwhile, the computational complexity remaining for the MEC server is given by ${c}^{edge}_{m,n}(t_p) = (1 - c_{m,n}^{k_{m,n}^{t_p}})C_{m,n}\rho$.

\subsection{Computation and Communication Models}
In the MEC-assisted XR system, distinct queues are established for managing AI service requests originating from each XR device $m$ and its associated model $n$. Due to the stochastic nature of wireless network conditions and the random arrival of tasks, dynamic allocation of computation and communication resources is crucial to optimizing overall system performance in real time.

The computational resources allocated to service $n$ by XR device $m$ and the MEC server are denoted by $f_{m,n}^l(t_s)$ and $f_{m,n}^e(t_s)$ (in cycles/s), respectively. These resource allocations are constrained by the available computational capacities $F_m^l$ at the XR device and $F^e$ at the MEC server respectively, which can be expressed as
\begin{eqnarray}\label{equ2-B-2}
&&\sum_{n \in \mathcal{N}_m} f_{m,n}^l(t_s) \le F_{m}^{l}, \ \forall m \in \mathcal{M},\\
&&\sum_{m \in \mathcal{M}}\sum_{n \in \mathcal{N}_m}f_{m,n}^e(t_s) \le F^{e}.
\end{eqnarray}

To simplify the problem, we assume that the system bandwidth $B_w$ is evenly divided into $M$ orthogonal channels to serve the $M$ XR devices. The maximum transmission rate for XR device $m$ in time slot $t_s$ can be expressed as
\begin{eqnarray}
R_m(t_s)=
b_w log_2 \left( 1+\frac{p_m(t_s)h_m(t_s)}{b_w N_0} \right)
\end{eqnarray}
 where $h_m(t_s)$ and $p_m(t_s)$ represent the channel gain and the allocated transmission power, respectively, while $b_w = \frac{B_w}{M}$ is the bandwidth allocated to each XR device. For AI service $n$ on device $m$, the allocated transmission rate $r_{m,n}(t_s)$ must satisfy the constraint
\begin{eqnarray}\label{equ:constraint-rate}
\sum_{n \in \mathcal{N}_m}r_{m,n}(t_s) \le R_m(t_s).
\end{eqnarray}

Therefore, the energy consumption of XR device $m$ in time slot $t_s$ comprises computational energy $E^{l}(t_s)$ and transmission energy $E^{t}(t_s)$, expressed as
\begin{eqnarray}
&& E_m^{l}(t_s)= \tau\delta(\sum_{n \in \mathcal{N}_m} f_{m,n}^l(t_s))^3,\\
&& E_m^{t}(t_s)= \tau p_m(t_s).
\end{eqnarray}
where $\delta$ is the energy coefficient.

\subsection{Queue model}

The arrival process of inference tasks generated from the corresponding AI service follows a specific random distribution.  To effectively manage these tasks, distributed queue modeling is required, encompassing local queues, transmission queues, and edge queues. The queues $Q_{m,n}^l(t_s)$ and $Q_{m,n}^e(t_s)$ (in cycles) represent the remaining computational workload backlog on XR device and MEC server, respectively. The queue $Q_{m,n}^t(t_s)$ (in bits) represents the remaining transmission data backlog in the XR device. All queues adhere to a first-come, first-served scheduling policy. We define $[x]^+=\mathop{\textrm{max}}\{x, 0\}$. The updates for each of the queues are expressed as follows:
\begin{align}
Q_{m,n}^l(t_s+1)\!=\![Q_{m,n}^l(t_s)-f^{l}_{m,n}(t_s)\tau+a^l_{m,n}(t_s)c^{local}_{m,n}(t_p)]^+,\label{Ql_update}
\end{align}
\begin{align}
\!Q_{m,n}^t(t_s+1)\!=\![Q_{m,n}^t(t_s)-r_{m,n}(t_s)\tau+a^t_{m,n}(t_s)d_{m,n}(t_p)]^+,\label{Qt_update}
\end{align}
\begin{align}
\!Q_{m,n}^e(t_s+1)\!=\![Q_{m,n}^e(t_s)-f^{e}_{m,n}(t_s)\tau+a^e_{m,n}(t_s)c^{edge}_{m,n}(t_p)]^+.\label{Qe_update}
\end{align}
where $a^l_{m,n}(t_s)$, $a^t_{m,n}(t_s)$ and $a^e_{m,n}(t_s)$ indicate the number of DNN inference tasks arriving at the local queue, transmission queue, and edge queue, respectively, in time slot $t_s$, and can be calculated by
\begin{eqnarray}\label{al_update}
&\! \! \! \! \! \! \! \!a^l_{m,n}(t_s+1) \!=\! \left\{
\begin{aligned}
&[a^l_{m,n}(t_s)-b_{m,n}^l(t_s)]^+,\qquad\qquad \: \: k_{m,n}^{t_p}=0\\
&[a_{m,n}(t_s)+a^l_{m,n}(t_s)-b_{m,n}^l(t_s)]^+, \: \: \mathop{\textrm{others}}\\
\end{aligned} 
\right.
\end{eqnarray}
\begin{eqnarray}\label{at_update}
&\! \! \! \! \! \! \! \!a^t_{m,n}(t_s+1) \!=\! \left\{
\begin{aligned}
&[a_{m,n}(t_s)+a_{m,n}^t(t_s)-b_{m,n}^t(t_s)]^+,\: \: k_{m,n}^{t_p}=0\\
&[a_{m,n}^t(t_s)-b_{m,n}^t(t_s)]^+,\: \: k_{m,n}^{t_p}=K_{m,n}\\
&[b_{m,n}^l(t_s)+a_{m,n}^t(t_s)-b_{m,n}^t(t_s)]^+, \: \: \mathop{\textrm{others}}\\
\end{aligned} 
\right.
\end{eqnarray}
\begin{eqnarray}\label{ae_update}
&\! \! \! \! \! \! \! \!a^e_{m,n}(t_s+1)  \!=\! \left\{
\begin{aligned}
&[a^e_{m,n}(t_s)-b^e_{m,n}(t_s)]^+,\: \:k_{m,n}^{t_p}=K_{m,n}\\
&[b^t_{m,n}(t_s)+a^e_{m,n}(t_s)-b^e_{m,n}(t_s)]^+,\mathop{\textrm{others}}
\end{aligned} 
\right.
\end{eqnarray}
where $a_{m,n}(t_s)$ represents the new inference tasks generated in time slot $t_s$. The variable $b_{m,n}^l(t_s)$, $b_{m,n}^t(t_s)$, and $b_{m,n}^e(t_s)$ indicate the number of processed DNN inference tasks removed from $Q_{m,n}^l(t_s)$, $Q_{m,n}^t(t_s)$ and $Q_{m,n}^e(t_s)$, respectively.

\subsection{Problem Formulation}
The primary objective of this paper is to minimize the long-term average energy consumption across all XR devices, while ensuring queue stability and satisfying the constraints on computational and communication resources. Given that partition decisions and resource allocations occur on two different time scales, we model this problem as a bi-level optimization problem under a dual time-scale. Let $\bm{k}^{t_p}\triangleq\{k_{m,n}^{t_p},\forall n,\forall m\}$, $\bm{r}(t)\triangleq\{r_{m,n}(t_s),\forall n\}$, $\bm{p}(t)\triangleq\{p_{m}(t_s),\forall m\}$, $\bm{f}^l(t) \triangleq \{f_{m,n}^l(t_s), \forall n, \forall m\}$, $\bm{f}^e(t) \triangleq \{f_{m,n}^e(t_s), \forall n, \forall m\}$.

\begin{Prob}[Bi-level Optimization Problem] The partition decision and resource allocation tasks of DNN based on dual time-scale can be formulated as a bi-level optimization problem.
\begin{align}
        \underset{\bm{k}^{t_p}}{\textrm{min}} &  \!    \lim\limits_{T_p\to+\infty} \sum_{t_p=1}^{T_p}\frac{1}{T_p}\left[\sum_{t_s=t_pG}^{t_pG+G-1}E^*(t_s,\bm{k}^{t_p})\right] \nonumber\\
        \textrm{s.t.} \nonumber \\ & \! \! \! E^*(t_s,\bm{k}^{t_p}) \triangleq 
        \begin{cases}
            &  \! \! \! \! \underset{\substack{\bm{r}(t), \bm{f}^l(t), \\ \bm{p}(t), \bm{f}^e(t)}} {\textrm{min}} \! \!  \!  
            \lim\limits_{T_s\to+\infty} \sum_{t_s=1}^{T_s}\frac{1}{T_s} [\displaystyle\sum_{{m \in \mathcal{M}}}  \big(E_{m}^l(t_s) \nonumber \\
            & \! \qquad \qquad \qquad \qquad \qquad + E_{m}^t(t_s)\big)] \nonumber
            \vspace{-0.1cm} \\
            & \textrm{s.t.} \nonumber \\
            &  C.1: \! \lim\limits_{T\to+\infty} \frac{1}{T} \displaystyle\sum_{t_s =0}^{T} \mathbb{E}[Q^l_{m,n}(t_s)]<\infty,  \\  
            &  C.2: \! \lim\limits_{T\to+\infty} \frac{1}{T} \displaystyle\sum_{t_s =0}^{T} \mathbb{E}[Q^t_{m,n}(t_s)]<\infty,   \\
            & C.3: \!  \lim\limits_{T\to+\infty} \frac{1}{T} \displaystyle\sum_{t_s =0}^{T} \mathbb{E}[Q^e_{m,n}(t_s)]<\infty,   \\
            & C.4: \displaystyle\sum_{n \in \mathcal{N}_m} f_{m,n}^l(t_s) \le F_{m}^{l}, \ \forall m \\
            & C.5: 0 \le p_m(t_s) \le p_m^{max}, \quad \forall{m} \\
            & C.6: \displaystyle\sum_{n \in \mathcal{N}_m}r_{m,n}(t_s) \le R_m(t_s), \quad  \forall m \\
            & C.7: \displaystyle\sum_{m \in \mathcal{M}}\sum_{n \in \mathcal{N}_m}f_{m,n}^e(t_s) \le F^{e}, \\
            & C.8: 0 \le f_{m,n}^l(t_s) \le \frac{Q_{m,n}^l(t_s)}{\tau}, \nonumber \\
            & C.9: 0 \le r_{m,n}(t_s) \le \frac{Q_{m,n}^t(t_s)}{\tau} ,\\
            & C.10: 0 \le f_{m,n}^e(t_s) \le \frac{Q_{m,n}^e(t_s)}{\tau} ,
            \end{cases} \\
        & C.11: k_{m,n}^{t_p} \in \{1, 2, \dots, K_{m,n}\}
        . \nonumber
\end{align} 
\vspace{-6mm}
\end{Prob}

Constraints $C.1$-$C.3$ are imposed to ensure the long-term stability of local, transmission, and edge queues, respectively. Constraints $C.4$-$C.7$ signify resource constraints at the device and server levels. Specifically, $C.4$ represents the limit on local computational resources available for each device $m$, ensuring allocations do not surpass the available resources. $C.5$ defines the maximum allowable transmit power for each XR device $m$. $C.6$ ensures that the allocated transmission rate for each device does not exceed the maximum transmission rate. $C.7$ restricts the total computational resource allocation at the MEC server, ensuring that the sum of allocated resources for edge computing remains within the allowable limit. Constraints $C.8$-$C.10$ define the upper bounds on the resource allocations for each queue to avoid over-allocation.

The formulated bi-level optimization problem involves a dual time-scale structure, making it inherently challenging to solve. The lower-level resource allocation is executed in every time slot $t_s$, whereas DNN partition adjustments are performed in the partition adjustment period $t_p$. Furthermore, uncertainties in wireless channel conditions, available computational resources, queue lengths, and task arrival rates introduce significant complexity to the optimization process. Additionally, the upper-level decisions cannot fully observe the dynamics of the lower-level problem, complicating the optimal DNN partitioning strategy.

To effectively address these challenges, we employ a Lyapunov-guided DRL approach in the next section, termed LyaPPO. This approach is designed to efficiently manage the dual time-scale nature of the system, optimize energy consumption, and ensure the stability of the queues across all XR devices.

\section{Lyapunov-guide Proximal Policy Optimization Algorithm}  \label{Sec4}
To address the original bi-level optimization problem, we decompose it into two sub-problems: an upper-level optimization problem and a lower-level optimization problem. In the lower-level optimization problem, the partition decisions of the DNN models are considered predetermined, and the focus is placed on jointly allocating computational and communication resources within the system. To tackle the challenges posed by the long-term queue stability constraints, we first reformulate it using the Lyapunov optimization method and subsequently solve it via convex optimization techniques. In the upper-level optimization problem, we model it as a Markov Decision Process (MDP) and employ the Proximal Policy Optimization (PPO) algorithm to find the optimal solution.

\subsection{Solution to the lower-level problem}
Given the partition point, the problem can be simplified to the following lower-level optimization problem concerning the allocation of communication and computational resources in the system.

\begin{Prob}[Lower-level Optimization Problem] Assuming the partition point $k_{m,n}^{t_p}$ is given, the communication and computational resource allocation problem can be expressed as follows:
\begin{align}
        \underset{\substack{\bm{r}(t), \bm{f}^l(t), \\ \bm{p}(t), \bm{f}^e(t)}}{\textrm{min}}
        & \! \! \!  \lim\limits_{T_s\to+\infty} \sum_{t_s=1}^{T_s}\frac{1}{T_s} [\displaystyle\sum_{{m \in \mathcal{M}}}  \big(E_{m}^l(t_s) + E_{m}^t(t_s)\big)]\nonumber\\
        \quad \textrm{s.t.} 
        &  \qquad C.1 - C.11 \nonumber
    \end{align} 
\vspace{-6mm}
\end{Prob}

Due to the long-term queue stability constraints, this problem remains challenging to solve. Therefore, we employ Lyapunov optimization to transform the problem into a series of single-slot optimization tasks, enabling tractable solutions.

We define the Lyapunov functions $V_{m,n}^l(t_s)$, $V_{m,n}^t(t_s)$ and $V_{m,n}^e(t_s)$ as quadratic functions to measure the congestion levels in the respective queues:
\begin{align}
V^l_{m,n}(t_s) &= \frac{1}{2}(Q^l_{m,n}(t_s))^2, \label{equ:Vl} \\
V^t_{m,n}(t_s) &= \frac{1}{2}(Q^t_{m,n}(t_s))^2, \label{equ:Vt}  \\
V^e_{m,n}(t_s) &= \frac{1}{2}(Q^e_{m,n}(t_s))^2 \label{equ:Ve} ,
\end{align}

Combining with \eqref{Ql_update}, \eqref{Qt_update}, \eqref{Qe_update}, \eqref{equ:Vl}, \eqref{equ:Vt} and \eqref{equ:Ve}, it can be further derived as
\begin{align}
V^l_{m,n}(t_s+1) &= \frac{1}{2}(Q^l_{m,n}(t_s+1))^2 \nonumber \\
&\leq \frac{1}{2}(Q^l_{m,n}(t_s))^2+\frac{1}{2}B^l_{m,n} \nonumber \\ 
&+Q^l_{m,n}(t_s)(a^l_{m,n}(t_s)c^{local}_{m,n}(t_p)-f^{l}_{m,n}(t_s)\tau), \label{ly1}\\
V^t_{m,n}(t_s+1) &= \frac{1}{2}(Q^t_{m,n}(t_s+1))^2 \nonumber \\
&\leq \frac{1}{2}(Q^t_{m,n}(t_s))^2+\frac{1}{2}B^t_{m,n} \nonumber \\ 
&+Q^t_{m,n}(t_s)(a^t_{m,n}(t_s)d_{m,n}(t_p)-r_{m,n}(t_s)\tau), \label{ly2}\\
V^e_{m,n}(t_s+1) &= \frac{1}{2}(Q^e_{m,n}(t_s+1))^2 \nonumber \\
&\leq \frac{1}{2}(Q^e_{m,n}(t_s))^2+\frac{1}{2}B^e_{m,n} \nonumber \\ 
&+Q^e_{m,n}(t_s)(a^e_{m,n}(t_s)c^{edge}_{m,n}(t_p)-f^{e}_{m,n}(t_s)\tau), \label{ly3}
\end{align}
where $B^l_{m,n}= (a^l_{max}c^{local}_{m,n}(t_p))^2+(F^{l}_{m}\tau)^2$, $B^t_{m,n}=(a^t_{max}d_{m,n}(t_p))^2+(R_{m}(t_s)\tau)^2$ and $(a^e_{max}c^{edge}_{m,n}(t_p))^2+(F^e\tau)^2$.

$a^l_{max}$, $a^t_{max}$ and $a^e_{max}$ are maximum number of arriving tasks. Let $\Delta V_{m}^l$, $\Delta V_{m}^t$ and $\Delta V_{m}^e$ to denote the Lyapunov drift, which represents the expected change of the Lyapunov function from the current queues status to the next queue status. Combine with \eqref{ly1}, \eqref{ly2} and \eqref{ly3},  the Lyapunov drift function can be given by
\begin{align}
\Delta V_{m}^l \leq & \sum_{{n \in \mathcal{N}_m}} (V_{m,n}^l(t_s+1)-V_{m,n}^l(t_s)) \nonumber\\
=&  \sum_{{n \in \mathcal{N}_m}} (Q^l_{m,n}(t_s)(a^l_{m,n}(t_s)c^{local}_{m,n}(t_p)-f^{l}_{m,n}(t_s)\tau)\nonumber\\
&+\frac{1}{2}B^l_{m,n}),\\
\Delta V_{m}^t \leq &\sum_{{n \in \mathcal{N}_m}} (V_{m,n}^t(t_s+1)-V_{m,n}^t(t_s)) \nonumber\\
=&  \sum_{{n \in \mathcal{N}_m}} (Q^t_{m,n}(t_s)(a^t_{m,n}(t_s)d_{m,n}(t_p)-r_{m,n}(t_s)\tau)\nonumber\\
&+\frac{1}{2}B^t_{m,n}),\\
\Delta V_{m}^e \leq & \sum_{{n \in \mathcal{N}_m}} (V_{m,n}^e(t_s+1)-V_{m,n}^e(t_s)) \nonumber\\
=&  \sum_{{n \in \mathcal{N}_m}} (Q^e_{m,n}(t_s)(a^e_{m,n}(t_s)c^{edge}_{m,n}(t_p)-f^e_{m,n}(t_s)\tau)\nonumber\\
&+\frac{1}{2}B^e_{m,n}),
\end{align}

An effective way to maintain queue stability is to minimize the Lyapunov drift function. To achieve this goal, we need to optimize both the energy consumption of XR devices and queue stability. Therefore, this problem can be further transformed into minimizing the Lyapunov drift-plus-penalty. The upper bound of the Lyapunov drift-plus-penalty can be expressed as
\begin{align}
&  \sum_{{m \in \mathcal{M}}}(\Delta V_{m}^l+\Delta V_{m}^t+\Delta V_{m}^e+U^l E_m^l(t_s)+U^t E_m^t(t_s)) \nonumber \\
&\leq  \sum_{{m \in \mathcal{M}}} \big( \sum_{{n \in \mathcal{N}_m}} Q^l_{m,n}(t_s)(a^l_{m,n}(t_s)c^{local}_{m,n}(t_p)-f^{l}_{m,n}(t_s)\tau)\nonumber\\
&+ \sum_{{n \in \mathcal{N}_m}}Q^t_{m,n}(t_s)(a^t_{m,n}(t_s)d_{m,n}(t_p)-r_{m,n}(t_s)\tau)\nonumber\\
&+ \sum_{{n \in \mathcal{N}_m}}Q^e_{m,n}(t_s)(a^e_{m,n}(t_s)c^{edge}_{m,n}(t_p)-f^e_{m,n}(t_s)\tau) \nonumber\\
&+ U^l E_{m}^l(t_s) + U^t E_{m}^t(t_s)\big), \label{drift_plus_penalty}
\end{align}
where the $U^l$ and $U^t$ are the weight parameters
to make a trade-off between the minimization of the objective function and the stability of the distributed queues.

Assuming $k_{m,n}^{t_p}$ is given and only the resource allocation for a single time slot needs to be considered, $Q^l_{m,n}(t_s)$, $Q^t_{m,n}(t_s)$, $Q^e_{m,n}(t_s)$, $a^l_{m,n}(t_s)$, $a^t_{m,n}(t_s)$, $a^e_{m,n}(t_s)$, $c^{local}_{m,n}(t_p)$, $d_{m,n}(t_p)$ and $c^{edge}_{m,n}(t_p)$ are determined before resource allocation, and can be ignored. Then, the original problem can be decoupled into the following three distributed sub-problems for solving.

\begin{Prob}[Sub-problem of Local Computing] The sub-problem of local computational resource allocation for XR device $m$ can be expressed as follows:
\label{sub1}
\begin{align}
 \underset{\bm{f}^l(t)}{\textrm{min}} \sum_{{m \in \mathcal{M}}} &\big( \sum_{{n \in \mathcal{N}_m}} -f^{l}_{m,n}(t_s)\tau Q^l_{m,n}(t_s)+ U^l E_{m}^l(t_s)\big),    \nonumber \\
\quad \textrm{s.t.} 
& \qquad C.4, \ C.8. \nonumber
\end{align}
\end{Prob}

\begin{Prob}[Sub-problem of Transmission] The sub-problem of transmission resource allocation for XR device $m$ can be expressed as follows:
\label{sub2}
\begin{align}
\underset{\bm{r}(t),\bm{p}(t)}{\textrm{min}} \sum_{{m \in \mathcal{M}}} & \sum_{{n \in \mathcal{N}_m}}\big(-r_{m,n}(t_s)\tau Q^t_{m,n}(t_s)\big)+ U^t E_{m}^t(t_s) \nonumber \\
\quad \textrm{s.t.} 
& \qquad C.5,\ C.6, \ C.9. \nonumber
\end{align}
\end{Prob}

\begin{Prob}[Sub-problem of Edge Computing] The edge computational resource allocation sub-problem can be formulated as follows:
\label{sub3}
\begin{align}
\underset{\bm{f}^e(t)}{\textrm{min}} & \sum_{{m \in \mathcal{M}}}\sum_{{n \in \mathcal{N}_m}}-f^e_{m,n}(t_s)\tau Q^e_{m,n}(t_s) \nonumber\\
\quad \textrm{s.t.}
& \qquad C.7,\ C.10. \nonumber
\end{align}
\end{Prob}

Then, we solve the three sub-problems separately. For \textbf{Problem} \ref{sub1}, each XR device $m \in \mathcal{M}$ only needs to solve its corresponding sub-problems locally. Considering that \textbf{Problem} \ref{sub1} is a convex problem, we can solve it using CVX \cite{cvx}. For each XR device, the computational complexity of solving this sub-problem is $O(N_m^{3.5})$ \cite{05}. Similarly, \textbf{Problem} \ref{sub2} can also be solved using CVX, and its complexity is $O(N_m^{3.5})$.

For \textbf{Problem} \ref{sub3}, the sub-problem can also be solved by CVX. However, with the computational complexity of $O((MN_m)^{3.5})$, when the number of users increases, the computational time will become unacceptable. To address this, we propose a low-complexity solution method, which is detailed below.

We employ the Lagrange multiplier method to find the optimal solution to \textbf{Problem} \ref{sub3}. The partial Lagrangian function is expressed as follows:
\begin{align}
\label{eq:Lagrangian}
L(f_{m,n}^{e}\left(t_{s}),\lambda,\mu_{m,n},v_{m,n}\right) & \! \! = \! \! \sum_{m}\sum_{n}-f_{m,n}^{e}\left(t_{s}\right)\tau Q_{m,n}^{e}\left(t_{s}\right) \nonumber\\
& \! \! \! \! \! \! \! \! \! \! \! \! \! \! \! \! \! \! \! \! \! \! \! \! \! \! \! \! \! \! \! \! \! \! \! \! \! \! \! \! \! \! \! \! \! \! \! \! \! \! \! \! \! \! \! \! \!  \! \! +   \lambda\left(\sum_{m}\sum_{n}f_{m,n}^{e}(t_{s})-F^{e}\right)^{}-\sum_{m}\sum_{n}\mu_{m,n}f_{m,n}^{e}\left(t_{s}\right)\nonumber\\
& \! \! \! \! \! \! \! \! \! \! \! \! \! \! \! \! \! \! \! \! \! \! \! \! \! \! \! \! \! \! \! \! \! \! +  \sum_{m}\sum_{n}v_{m,n}(f_{m,n}^{e}\left(t_{s}\right)-\frac{Q_{m,n}^{e}\left(t_{s}\right)}{\tau}) .
\end{align}

Using the Lagrangian function, \textbf{Problem} \ref{sub3} can be solved based on the Karush-Kuhn-Tucker (KKT) conditions \cite{20}, as described in the following theorem.

\begin{theorem}
Consider $ \{\tau Q^e_{m,n}\} $, where each element corresponds to a tuple of indices $ (m, n) $. Upon sorting this sequence, an ordered sequence $ f_1 \ge f_2 \ge \cdots \ge f_{(\sum_{m \in \mathcal{M}}N_m)} $ can be obtained. Given a threshold $ K $, if the condition $\sum_{k=1}^{K}f_k < F^e \le \sum_{k=1}^{K+1}f_k$ is met, the optimal solution can be given by
\begin{align}\label{fe}
f_{m,n}^{e*}(t_{s}) =
\begin{cases}
0, & \text{if } k > K, \\
F^e - \sum_{k=1}^{K}f_k , & \text{if } k = K, \\
\frac{Q_{m,n}^{e}(t_{s})}{\tau}, & \text{if } k < K. \\
\end{cases} 
\end{align}  
\end{theorem}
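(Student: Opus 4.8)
The plan is to treat Problem~\ref{sub3} as what it structurally is: a linear program, since the objective $\sum_{m,n}-f^e_{m,n}(t_s)\tau Q^e_{m,n}(t_s)$ is linear in the allocations and the feasible set carved out by $C.7$ and $C.10$ is the intersection of a single budget hyperplane with a box. Because the problem is convex with affine constraints, the KKT conditions are necessary and sufficient for global optimality, so it suffices to produce a primal point together with multipliers $(\lambda,\{\mu_{m,n}\},\{v_{m,n}\})$ satisfying them. I would work directly from the partial Lagrangian \eqref{eq:Lagrangian}, with $\lambda\ge 0$ attached to the coupling budget $C.7$ and $\mu_{m,n},v_{m,n}\ge 0$ attached to the lower and upper box bounds of $C.10$.

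First I would impose stationarity $\partial L/\partial f^e_{m,n}=0$, giving $\lambda-\mu_{m,n}+v_{m,n}=\tau Q^e_{m,n}(t_s)$ for every index pair, together with the complementary-slackness identities $\mu_{m,n}f^e_{m,n}=0$, $v_{m,n}\big(f^e_{m,n}-Q^e_{m,n}/\tau\big)=0$, and $\lambda\big(\sum_{m,n}f^e_{m,n}-F^e\big)=0$. The decisive step is a three-way comparison of the coefficient $\tau Q^e_{m,n}(t_s)$ against the single scalar $\lambda$. If $\tau Q^e_{m,n}>\lambda$, the sign constraints force $v_{m,n}>0$, so slackness pins $f^{e}_{m,n}=Q^e_{m,n}/\tau$ (the queue is filled to its cap); if $\tau Q^e_{m,n}<\lambda$, then $\mu_{m,n}>0$ and $f^e_{m,n}=0$; and if $\tau Q^e_{m,n}=\lambda$, the allocation is free to sit strictly between its bounds. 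This is precisely the threshold rule of \eqref{fe}, with $\lambda$ playing the role of a water level that separates saturated queues from idle ones.

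It remains to locate the threshold via the budget. Sorting the priorities $\tau Q^e_{m,n}$ in decreasing order --- equivalently sorting the caps $Q^e_{m,n}/\tau$, since both are monotone in $Q^e_{m,n}$ --- I would tune $\lambda$ and fill queues from the top of the ordering until $C.7$ becomes tight. Whenever $\sum_{m,n}Q^e_{m,n}/\tau>F^e$ the budget must be binding at the optimum, so $\lambda>0$; the cutoff index $K$ is then fixed by the partial sums of the ordered caps, with all queues ranked above the cutoff saturated, all those ranked below receiving nothing, and the unique boundary queue absorbing the residual budget, which is exactly \eqref{fe}.

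The step I expect to be the main obstacle is the boundary bookkeeping. One must verify that placing the entire residual budget on the single boundary queue (rather than spreading it across several near-threshold queues) is consistent with complementary slackness, check that the resulting boundary allocation indeed lies in $[0,Q^e_{m,n}/\tau]$, and handle both ties in the sorted priorities and the degenerate regime $\sum_{m,n}Q^e_{m,n}/\tau\le F^e$ in which every queue is filled and $\lambda=0$. Getting the partial-sum indexing of $K$ exactly right --- so that the cumulative saturated caps bracket $F^e$ in the stated inequality --- is the only genuinely delicate part; everything else follows mechanically from the KKT system.
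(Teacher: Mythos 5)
Your proposal follows essentially the same route as the paper's own proof in Appendix A: KKT conditions on the partial Lagrangian \eqref{eq:Lagrangian}, a three-way threshold comparison of $\tau Q^{e}_{m,n}(t_s)$ against the budget multiplier $\lambda$, and a sort-and-fill (water-filling) determination of the cutoff $K$ via partial sums, including the degenerate $\lambda=0$ regime. Your caution about the partial-sum bookkeeping is warranted --- as literally written, the residual term $F^e-\sum_{k=1}^{K}f_k$ combined with the bracketing condition $\sum_{k=1}^{K}f_k < F^e \le \sum_{k=1}^{K+1}f_k$ leaves the budget slack by one index (the residual should go to the first unsaturated queue), and the sorted quantities $\tau Q^e_{m,n}$ versus the caps $Q^e_{m,n}/\tau$ are conflated --- but these are defects of the statement's indexing that the paper's proof shares, not gaps in your approach.
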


\begin{proof}\renewcommand{\qedsymbol}{}
Please refer to Appendix A.
\end{proof}

\subsection{Solution to the upper-level problem}
To optimize energy consumption for XR devices on a large time scale and ensure queue stability, we define the upper-level problem of selecting the DNN partition points as an infinite-horizon Markov Decision Process (MDP). The MDP is represented by the tuple $\{\mathcal{S}, \mathcal{A}, \mathcal{R}, \gamma_{ppo} \}$. $\mathcal{S}$ denotes the state space, representing the current environment of collaborative inference between XR devices and the MEC server. $\mathcal{A}$ denotes the action space, consisting of the possible DNN model partition points. $\mathcal{R}$ denotes the reward function, and $ \gamma_{ppo}$ is the discount factor where $0 < \gamma_{ppo} <1$. The MDP is defined as follows,

\begin{enumerate}
\item\textbf{State:}
The state $\bm{s}^{t_p} \in \mathcal{S}$ represents the observation during each partition adjustment period. Specifically, it consists of the queue backlogs of the distributed queues $Q_{m,n}^l(t_s)$, $Q_{m,n}^t(t_s)$, $Q_{m,n}^e(t_s)$, the computational complexity and feature map data associated with the current partition point $k_{m,n}^{t_p}$, represented as $c_{m,n}^{local}(t_p)$, $d_{m,n}(t_p)$, $c_{m,n}^{edge}(t_p)$, the computational and communication resources allocated, including $f_{m,n}^l(t_s)$, $p_m(t_s)$, $r_{m,n}(t_s)$, $f_{m,n}^e(t_s)$ and the rate of device $R_m(t_s)$. Therefore, the state $\bm{s}^{t_p} \in \mathcal{S}$ at adjustment period $t_p$ is thus defined as
\begin{align}
\bm{s}^{t_p} = \{&\overline{Q}_{m,n}^l(t_p),\overline{Q}_{m,n}^t(t_p),\overline{Q}_{m,n}^e(t_p),\nonumber\\
&c_{m,n}^{local}(t_p),d_{m,n}(t_p),c_{m,n}^{edge}(t_p),\nonumber\\
&\overline{f}_{m,n}^l(t_p),\overline{p}_m(t_p),\overline{r}_{m,n}(t_p),\overline{f}_{m,n}^e(t_p), \overline{R}_m(t_p), \nonumber\\
&\forall m \in M, \ n \in N_m \},\label{state}
\end{align}
where 
\begin{align}
\overline{Q}_{m,n}^l(t_p) =  & \frac{1}{G}\sum_{t_s=t_pG}^{t_pG+G-1} {Q}_{m,n}^l(t_s), \label{mean1}\\
\overline{Q}_{m,n}^t(t_p) = &  \frac{1}{G} \sum_{t_s=t_pG}^{t_pG+G-1} {Q}_{m,n}^t(t_s), \label{mean2}\\
\overline{Q}_{m,n}^e(t_p) = &  \frac{1}{G} \sum_{t_s=t_pG}^{t_pG+G-1} {Q}_{m,n}^e(t_s), \label{mean3}\\
\overline{f}_{m,n}^l(t_p) =& \frac{1}{G} \sum_{t_s=t_pG}^{t_pG+G-1} {f}_{m,n}^l(t_s), \label{mean4}\\
\overline{p}_{m}(t_p) = & \frac{1}{G} \sum_{t_s=t_pG}^{t_pG+G-1} {p}_{m}(t_s) ,\label{mean5}\\
\overline{r}_{m,n}(t_p) = & \frac{1}{G} \sum_{t_s=t_pG}^{t_pG+G-1} {r}_{m,n}(t_s), \label{mean6}\\
\overline{f}_{m,n}^e(t_p) = & \frac{1}{G} \sum_{t_s=t_pG}^{t_pG+G-1} {f}_{m,n}^e(t_s), \label{mean7} \\
\overline{R}_{m}(t_p) = & \frac{1}{G} \sum_{t_s=t_pG}^{t_pG+G-1} {R}_{m}(t_s).\label{mean8}
\end{align}

\item\textbf{Action Space:} The action space comprises the set of possible partition points for the DNN models on the XR devices. At each partition adjustment period $t_p$, the action is denoted by $\bm{a}^{t_p} = \{a_{m,n}^{t_p}, \forall m,n\} \in \mathcal{A}$, where $a_{m,n}^{t_p} \in \mathcal{K}_{m,n}$ represents the chosen partition point for the DNN model corresponding to service $n$ on device $m$.

\item\textbf{Reward Function:} The reward function is a weighted sum of two parts: XR device's energy consumption and the scaled sum of distributed queue lengths. Therefore, the reward $\bm{r}^{t_p}$ can be given by

\begin{figure}[th]
\includegraphics[width=0.5\textwidth]
{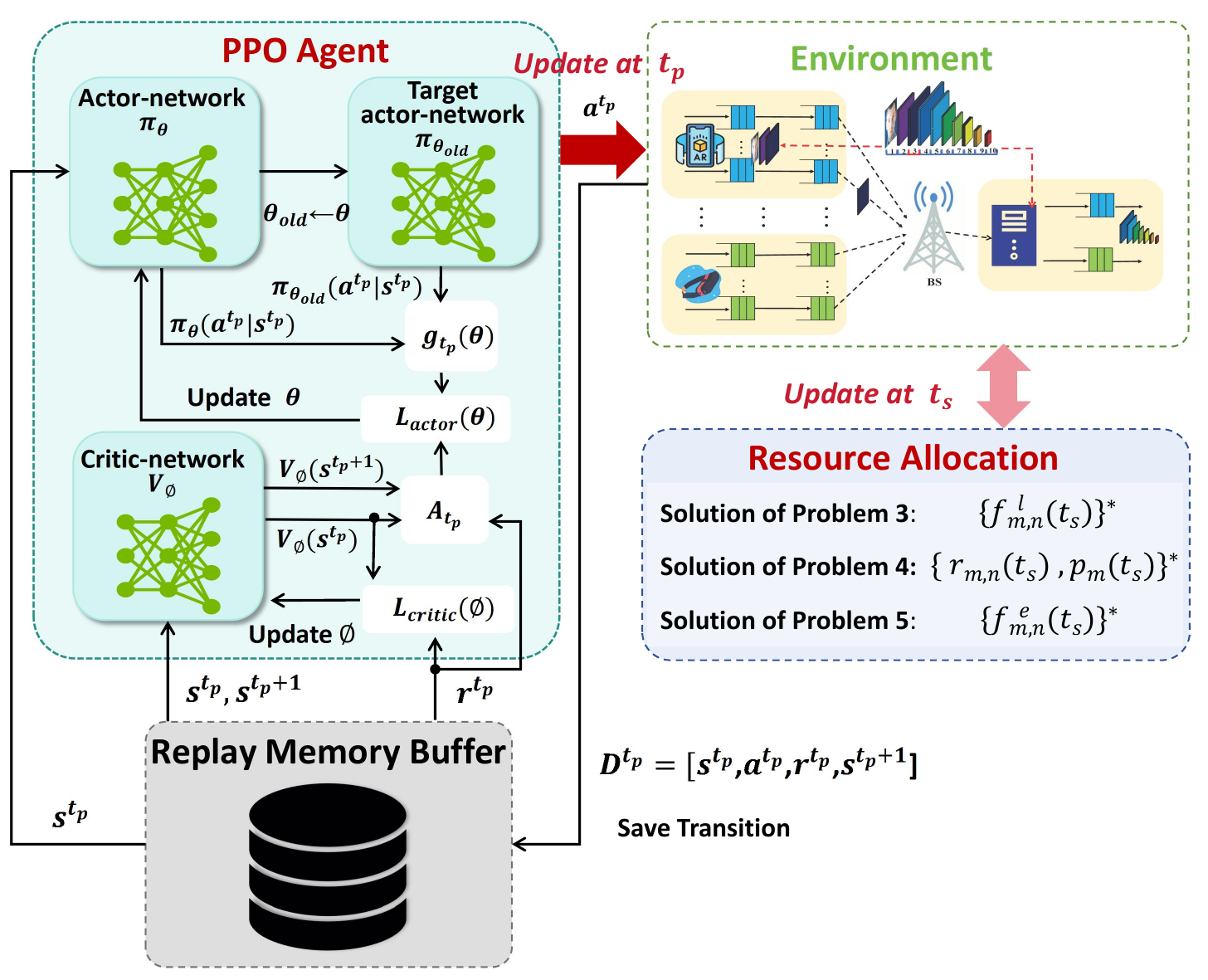}
\caption{The schematics of the LyaPPO algorithm.}
\label{fig-Algo} 
\end{figure}

\begin{align}
    r^{t_p} &= - \sum_{m \in \mathcal{M}}\left[\omega_1(\widetilde{E}^l_m(t_p) + \widetilde{E}^t_m(t_p))+\omega_2 \bar{Q}\right], \label{reward}
\end{align}
where $\omega_1,\omega_2$ denote the weight coefficients. $\bar{Q}$ represents the scaled sum of $Q_{m,n}^l(t_p), Q_{m,n}^t(t_p), Q_{m,n}^e(t_p).$ $\widetilde{E}^l_m(t_p)$ and $\widetilde{E}^t_m(t_p)$ represent the energy consumption caused by local computing and transmission the during the partition adjustment period, and can be respectively expressed as
\begin{align}
\widetilde{E}^l_m(t_p) =  & \frac{1}{G}\sum_{t_s=t_pG}^{t_pG+G-1} {E}_{m,n}^l(t_s), \label{mean_el}\\
\widetilde{E}^t_m(t_p) = &  \frac{1}{G} \sum_{t_s=t_pG}^{t_pG+G-1} {E}_{m,n}^t(t_s).\label{mean_et}
\end{align}

\end{enumerate}

To solve the MDP in the upper-level problem, we employ the PPO algorithm. As shown in Fig. \ref{fig-Algo}, we define $\theta$ and $\theta_{old}$ as the parameters of the actor-network and the target actor-network, respectively. The parameters of the critic network are denoted as $\phi$. In each time slot $t_p$, XR devices and the MEC server perform resource allocation to optimize both queue lengths and energy consumption. During this process, key states, i.e., ${Q}_{m,n}^l(t_s),{Q}_{m,n}^t(t_s),{Q}_{m,n}^e(t_s),{f}_{m,n}^l(t_s),{p}_m(t_s),$ ${r}_{m,n}(t_s),{f}_{m,n}^e(t_s), {h}_m(t_s), E_m^l(t_s), E_m^t(t_s)$, are stored in a temporary memory buffer $\mathcal{D}s$ on a small time scale. During each DNN partition adjustment period $t_p$, the PPO algorithm will calculate rewards using equations \eqref{reward}, \eqref{mean_el}, and \eqref{mean_et}. Subsequently, it will generate the system state based on $\mathcal{D}s$ and select the optimal partition points for all DNN models to ensure efficient task execution and resource utilization. After completing one such cycle, the PPO algorithm records the sampling experience $D_p = \{ \bm{s}^{t_p},\bm{a}^{t_p},r^{t_p}, \bm{s}^{t_p+1} \}$ in a replay memory buffer $\mathcal{D}_p$. This replay buffer is used for model training and will improve policy decisions in subsequent iterations. In model training, the loss function $L_{actor}(\theta)$ of the actor-network can be expressed as
\begin{align}
L_{actor}(\theta) = & \mathbb{E} [ \min ( g_{t_p}(\theta) \cdot A_{t_p}, \nonumber \\
& \text{clip} \left( g_{t_p}(\theta), 1 - \epsilon, 1 + \epsilon \right) \cdot A_{t_p} ) ].  \label{Actorloss}
\end{align}    
where $\mathbb{E}[\cdot]$ denotes the expectation across a mini-batch of samples. $A_{t_p} = r(\bm{s}^{t_p},\bm{a}^{t_p})+ \gamma_{ppo} V_{\phi}(\bm{s}^{t_p+1}) - V_{\phi}(\bm{s}^{t_p})$ and $g_{t_p}(\theta) = \frac{\pi_\theta(\bm{a}^{t_p}|\bm{a}^{t_p})}{\pi_{\theta_{\text{old}}}(\bm{a}^{t_p}|\bm{s}^{t_p})}$ denote the advantage function and the probability ratio between the current and the old policy, respectively. $V_{\phi}(\cdot)$ is the output of the critic-network, which represents the state-value function. The loss function for critic-network $L_{critic}(\phi)$ can expressed as
\begin{equation}
L_{critic}(\phi) = \mathbb{E}_{t} \left[ \left( V_\phi(\bm{s}^{t_p}) - \sum_{k=0}^\infty \gamma_{ppo}^k r_{t_p+k}, \right)^2 \right].
\label{Criticloss}
\end{equation}
In addition, during the training process, the parameters of the target actor-network will be periodically updated to synchronize with those of the actor-network. This synchronization ensures a stable training process and facilitates a gradual improvement in model performance.

Based on the solutions for the upper-level and lower-level problems described above, the pseudocode of our proposed LyaPPO algorithm is presented in Algorithm \ref{algorithm1}.

\begin{algorithm}[h]
    \caption{LyaPPO for Model Partitioning and Resource Allocation}
    \label{algorithm1}
    \KwIn{System parameters in \textbf{Problem 1};}
    \KwOut{Parameter $\theta$ of the convergent actor-network;}
    \BlankLine
    Initialize the actor-network $\pi_{\theta}$ with $\theta$ and the critic-network $V_\phi$ with $\phi$;\\
    Initialize the target actor-network $\pi_{\theta_{old}}$ with $\theta_{old}=\theta$;\\
    Initialize an empty replay memory buffer;
    \BlankLine
    \For{episode = 0 \KwTo max episode}
    {   Initial observation state;\\
        Initial state $\bm{s}^{0}$ and action $\bm{a}^{0}$; \\
        \For{$t_p = 0$ \KwTo $T_p$}
        {

            \For{$t_s = 0$ \KwTo $G-1$}
            {
                Obtain uplink channel gains $h_m(t_s)$ and queue state;\\
                Use CVX to obtain $\{f_{m,n}^l(t_s)\}^{*}$ by solving \eqref{sub1};\\
                Use CVX to obtain $\{r_{m,n}(t_s), p_m(t_s)\}^{*}$ by solving \eqref{sub2};\\
                Calculate \eqref{fe} to obtain $\{f_{m,n}^e(t_s)\}^{*}$;\\
                Store $\{Q_{m,n}^l(t_s), Q_{m,n}^t(t_s), Q_{m,n}^e(t_s)$ $f_{m,n}^l(t_s), p_m(t_s),r_{m,n}(t_s),f_{m,n}^e(t_s),$ $h_m(t_s), E_m^l(t_s), E_m^t(t_s)\}$ in the temporary memory buffer $\mathcal{D}_s$ on a small time scale.
            }
            Calculate reward $r^{t_p}$ by \eqref{reward}-\eqref{mean_et} based on $\mathcal{D}_s$;\\
            Generate $\bm{s}^{t_p+1}$ by \eqref{state}-\eqref{mean8} based on $\mathcal{D}_s$;\\
            Clear the temporary memory buffer $\mathcal{D}_s$; \\
            Store transition $D_p = \{ \bm{s}^{t_p},\bm{a}^{t_p},r^{t_p}, \bm{s}^{t_p+1} \}$ in the replay memory buffer $\mathcal{D}_p$;\\
            Update the state $\bm{s}^{t_p} \gets \bm{s}^{t_p+1}$;\\
            Select the action $\bm{a}^{t_p}$ using $\pi_{\theta_{old}}(\bm{a}^{t_p}|\bm{s}^{t_p})$;\\
        }
            \ForEach{training step}
            {   Sample mini-batch data from the replay memory buffer $\mathcal{D}_p$;\\
                Train actor-network by loss function $L_{actor}(\theta)$; \\
                Train critic-network by loss function $L_{critic}(\phi)$; \\
            }             
            Update target actor-network $\theta_{\text{old}} \gets \theta$;\\
            Clear the replay memory buffer $\mathcal{D}_p$.
    }
\end{algorithm}

\section{Experiment Evaluations} \label{Sec5}

This section provides a comprehensive description of the simulation setup, parameter configurations, and results analysis to evaluate the performance of the proposed LyaPPO algorithm. All simulations were conducted on the PyTorch 1.13.0 platform, utilizing a 13th Gen Intel(R) Core(TM) i9-13900K CPU (3GHz) with 125.5GB of RAM. The uplink bandwidth was set to $B_w = 1$ MHz, and the power spectral density was fixed at $N_0 = -174$ dBm/Hz. Following the channel model in \cite{01}, the average channel gain for each XR device $m \in M$ is expressed as $\bar{h}_m = \left( A_d \frac{3 \times 10^8}{4 \pi f_c d_m} \right)^{d_e}$, where $A_d = 3$ represents the antenna gain, $f_c = 915$ MHz denotes the carrier frequency, $d_e = 3$ is the path loss exponent, and $d_m \in [150, 250]$ m corresponds to the distance to the MEC server. Uplink channels $h_m$ adhere to a Rayleigh fading model, given by $h_m = \beta \bar{h}_m$, where $\beta$ follows an independent exponential distribution with a mean of one.

The simulation setup involves four XR devices ($M = 4$), each running two DNN models ($N_m = 2$) to handle AI service inference tasks. These tasks are generated according to a Poisson arrival process with a mean arrival rate of $\mathbb{E}[a_{m,n}(t_s)] = \lambda$. The computational complexity coefficient $\rho$ is set to 0.12 cycle/MAC as per \cite{05}. Table \ref{table1} summarizes the default simulation parameters. For training, the PPO framework employs two hidden layers with 128 neurons in both the actor and critic networks, a learning rate of $3 \times 10^{-4}$, and the Adam optimizer. The training process is conducted over 2500 episodes, with each episode comprising 200 exploration steps.

\begin{table}[t]
\centering
\footnotesize
\caption{Default Simulation Parameters}
\label{table1}
\renewcommand\arraystretch{1.2}
\begin{tabular}{|c||c|} 
\hline
$(M,N_m)=(4,2)$
&$\lambda = 0.2$ requests/second \\
\hline
$\tau = 10$ ms
&$G = 10$\\
\hline
$(F_m^l,F^e) = (1.5,20)$ GHz 
&$p_m^{\text{max}} = 0.3$ W \\
\hline
$\rho = 0.12$ cycles/MAC 
& $\delta = 10^{-28}$ Watt$\cdot s^3$ \\
\hline
$(V_1,V_2)=(1e9,1e6)$
& $(\beta_1,\beta_2,\beta_3)\!\!\! = (0.6,0.2,0.2)$ \\
\hline
\end{tabular}
\end{table}

\subsection{Experiment Setting}
The proposed LyaPPO algorithm was compared against several baseline schemes.
\begin{enumerate}
    \item \textit{PPO:} Uses PPO to jointly optimize model partitioning and resource allocation \cite{35}. 
    \item \textit{DDPG-Cov:} Employs the deep deterministic policy gradient (DDPG) algorithm for model partitioning and convex optimization for resource allocation \cite{31}.
    \item \textit{Random-Cov:} Selects model partition points randomly within the action space, with convex optimization handling resource allocation \cite{30}.
    \item \textit{Fix-Cov:} Fixes model partition points a priori and allocates resources via convex optimization \cite{04}.
\end{enumerate}

\subsection{Performance Comparison}
\begin{figure}[h]
\includegraphics[width=8cm,height=6cm]{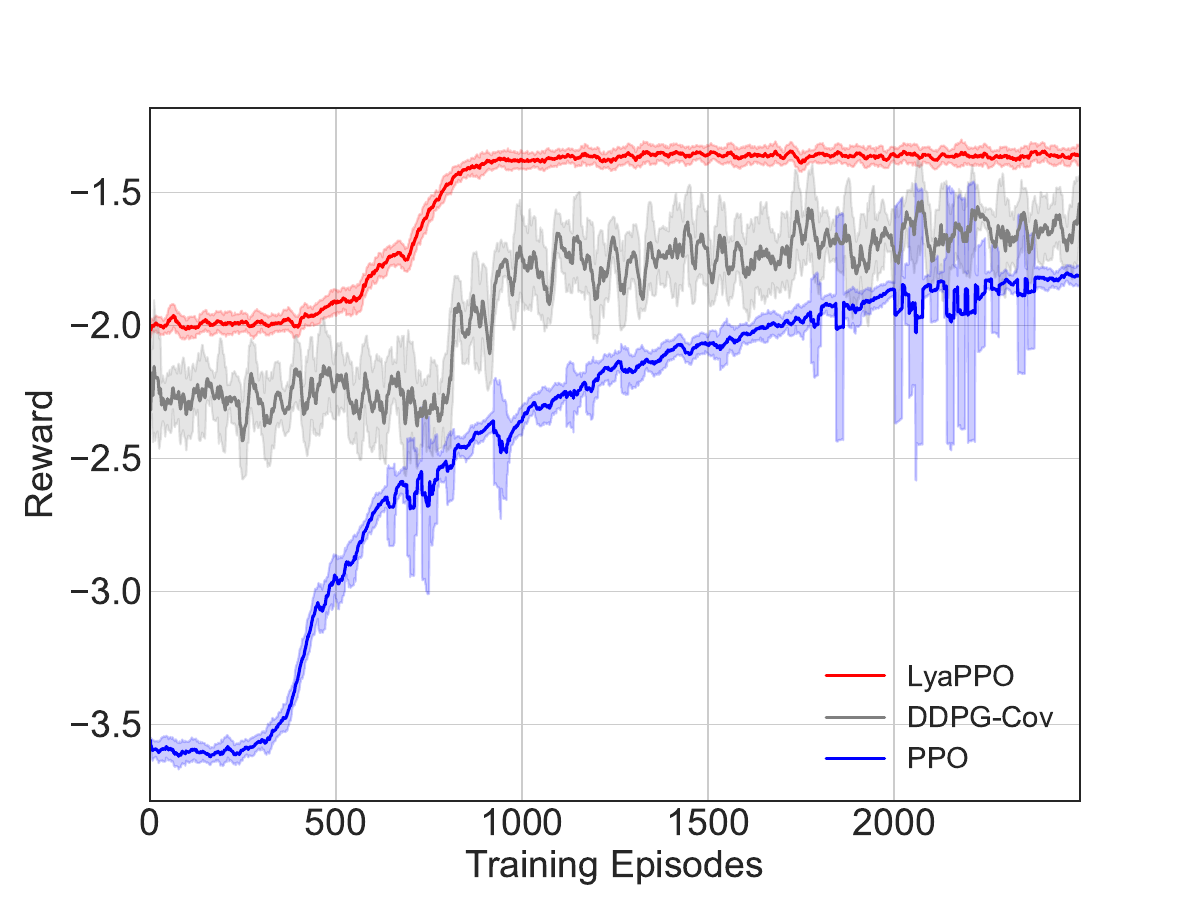}
\centering
\caption{\label{fig:convergence}Comparison of convergence in the training process}
\end{figure}
To compare the convergence performance of the LyaPPO algorithm with other DRL-based baselines, training curves were plotted and smoothed using a sliding window to highlight overall trends in the raw data. Fig. \ref{fig:convergence} shows that LyaPPO achieves the highest average cumulative return and converges around 1000 episodes. 
Compared to the DDPG-Cov baseline, the LyaPPO algorithm shows reduced fluctuations in the convergence curve. The LyaPPO algorithm converges faster than PPO baseline. The main reason is that the LyaPPO algorithm avoids the exploration of high-dimensional action space compared to PPO baseline which uses the DRL framework to jointly adjust model partition points and resource allocation. Therefore, the LyaPPO algorithm achieves the best performance.

\begin{figure}[h]
    \centering
    \captionsetup[subfigure]{justification=centering, font=small, skip=2pt} 
    \begin{subfigure}[t]{\linewidth}
        \centering
        \includegraphics[width=8cm,height=4.3cm]{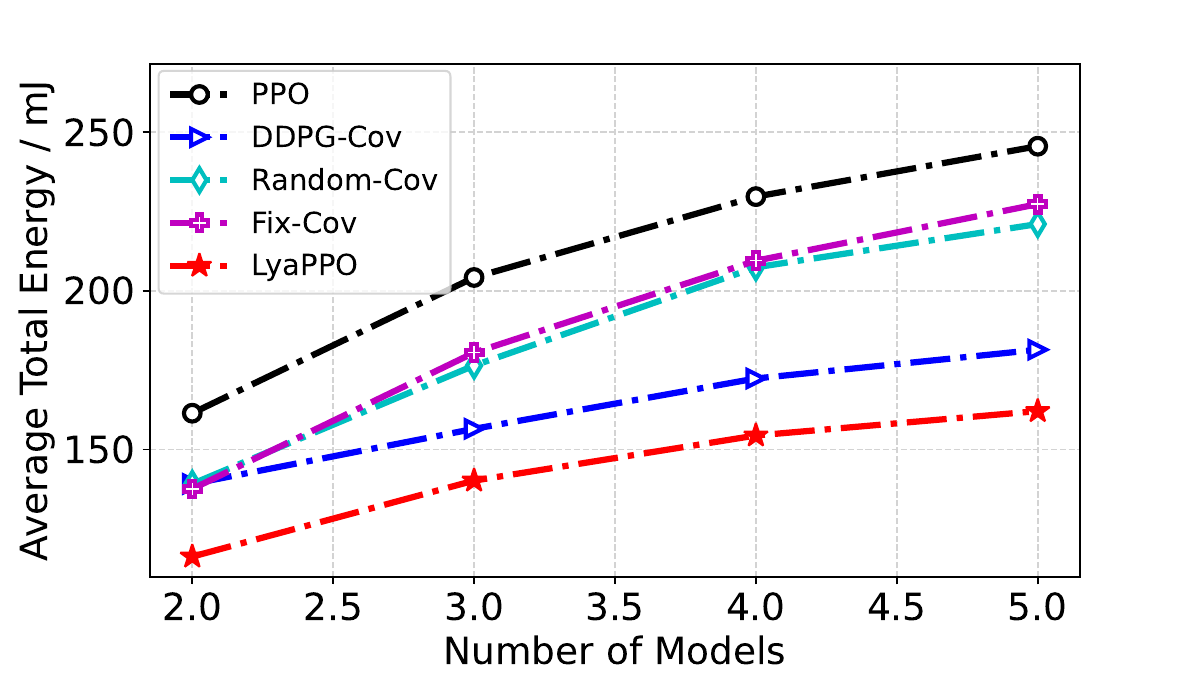}
        \caption{\label{fig:exp24Etotal}Total Energy Consumption}
    \end{subfigure}
    \begin{subfigure}[t]{\linewidth}
        \centering
        \includegraphics[width=8cm,height=4.3cm]{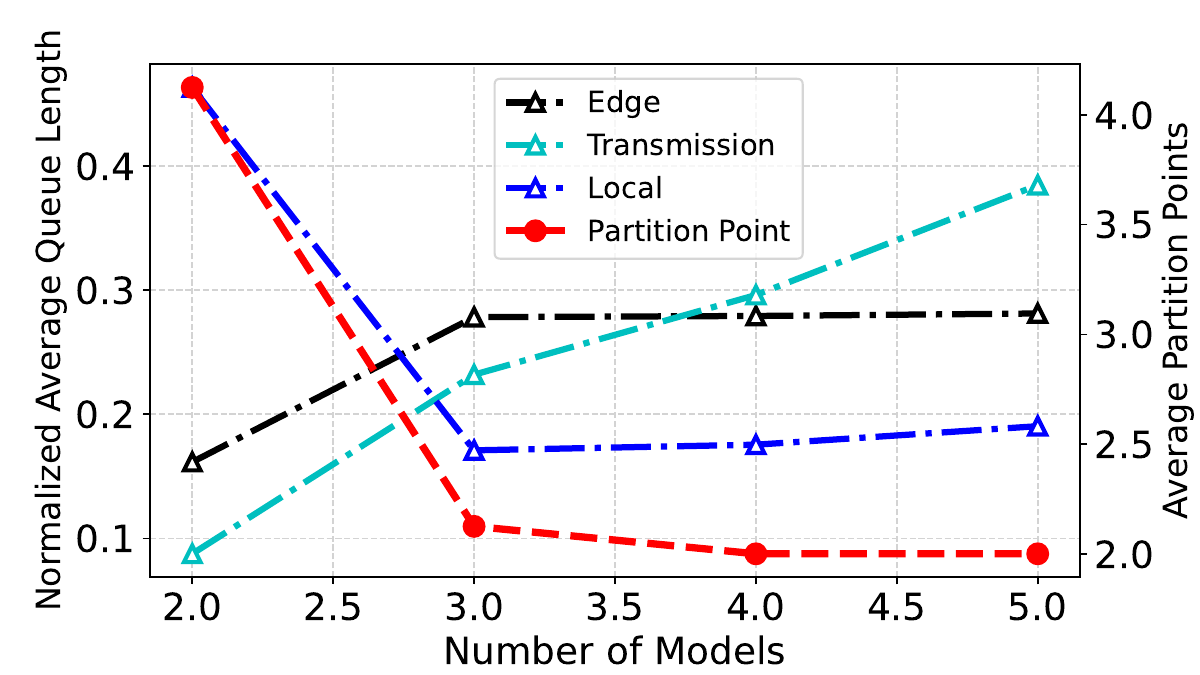}
        \caption{The normalized average queue length and average model partition points of the proposed algorithm\label{fig:exp24_k_queue}}
    \end{subfigure}
    \caption{Comparison of algorithm performance with different numbers of deployed models $N_m$ on the XR device.\label{fig:exp24}}
\end{figure}

In Fig. \ref{fig:exp24}, we evaluate the effect of the number of deployed models on algorithm performance. As demonstrated in Fig. \ref{fig:exp24Etotal}, the average total energy consumption of the LyaPPO algorithm has reduced by 11.98\%, 31.49\%, 22.28\%, 23.18\%, when compared to PPO, DDPG-Cov, Random-Cov, Fix-Cov baselines, respectively. The total energy consumption shows an upward trend with $N_m$ due to the increase in the total number of DNN inference requests on each XR device. To process all the tasks in the queues, XR devices then require higher local computational frequencies for computing, as well as greater transmit powers and transmission rates for offloading tasks to the MEC server.

In Fig. \ref{fig:exp24_k_queue}, the trade-off between local energy consumption and transmission energy consumption is reflected in the adjustment of model partition points and the queue backlogs. As $N_m$ increases, the partition points advance, while the backlogs in local queues diminish, and the backlogs in transmission and edge queues expand.

\begin{figure}[t]
    \centering
    \captionsetup[subfigure]{justification=centering, font=small, skip=2pt} 
    \begin{subfigure}[t]{1\linewidth}
        \centering
        \includegraphics[width=8cm,height=4.3cm]{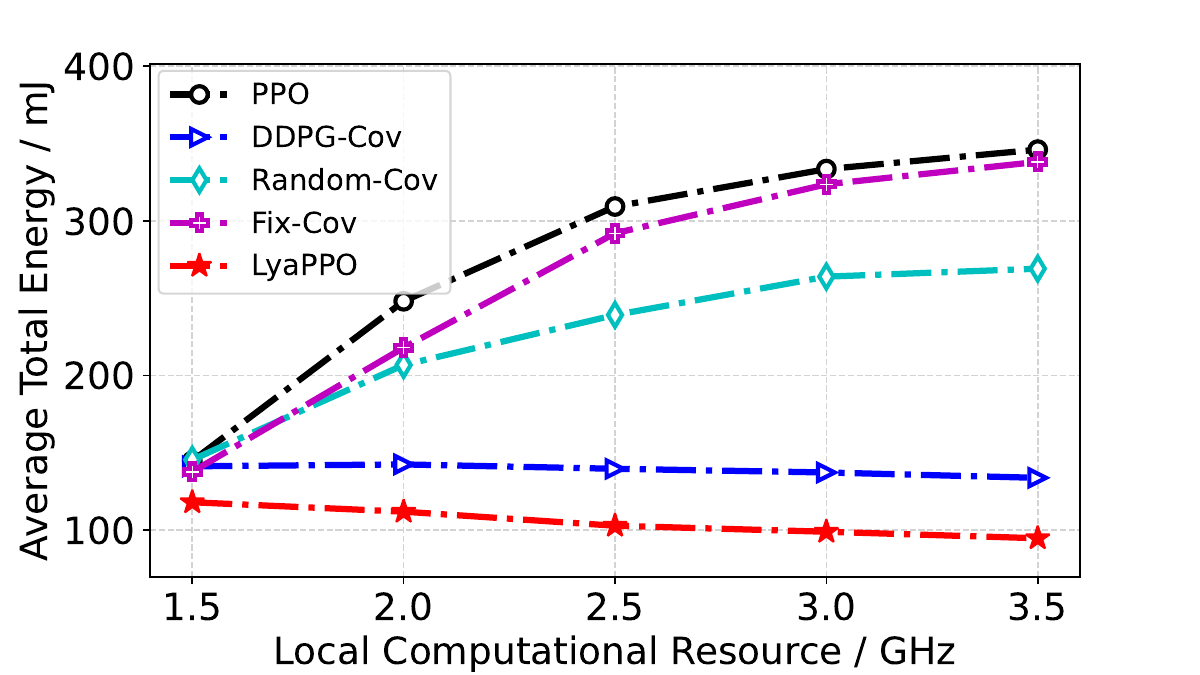}
        \caption{\label{fig:exp22Etotal}Total Energy Consumption}
    \end{subfigure}
    \hfill
    \begin{subfigure}[t]{1\linewidth}
        \centering
        \includegraphics[width=8cm,height=4.3cm]{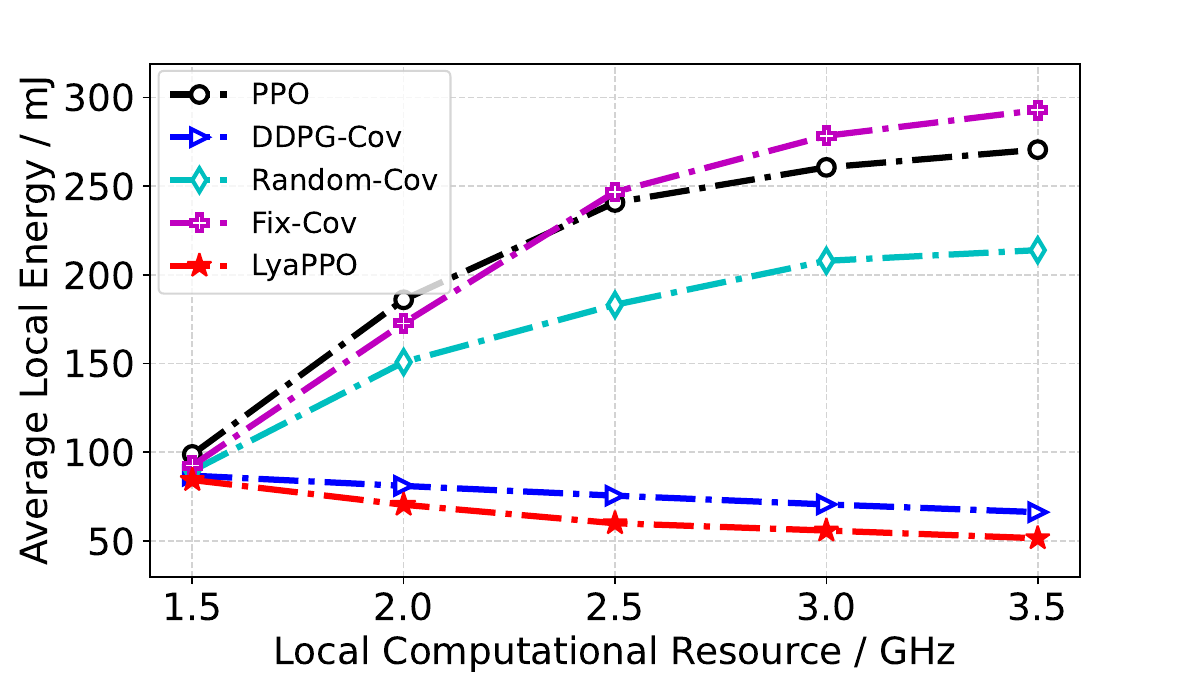}
        \caption{\label{fig:exp22El}Local Energy Consumption}
    \end{subfigure}
    \hfill
    \begin{subfigure}[t]{1\linewidth}
        \centering
        \includegraphics[width=8cm,height=4.3cm]{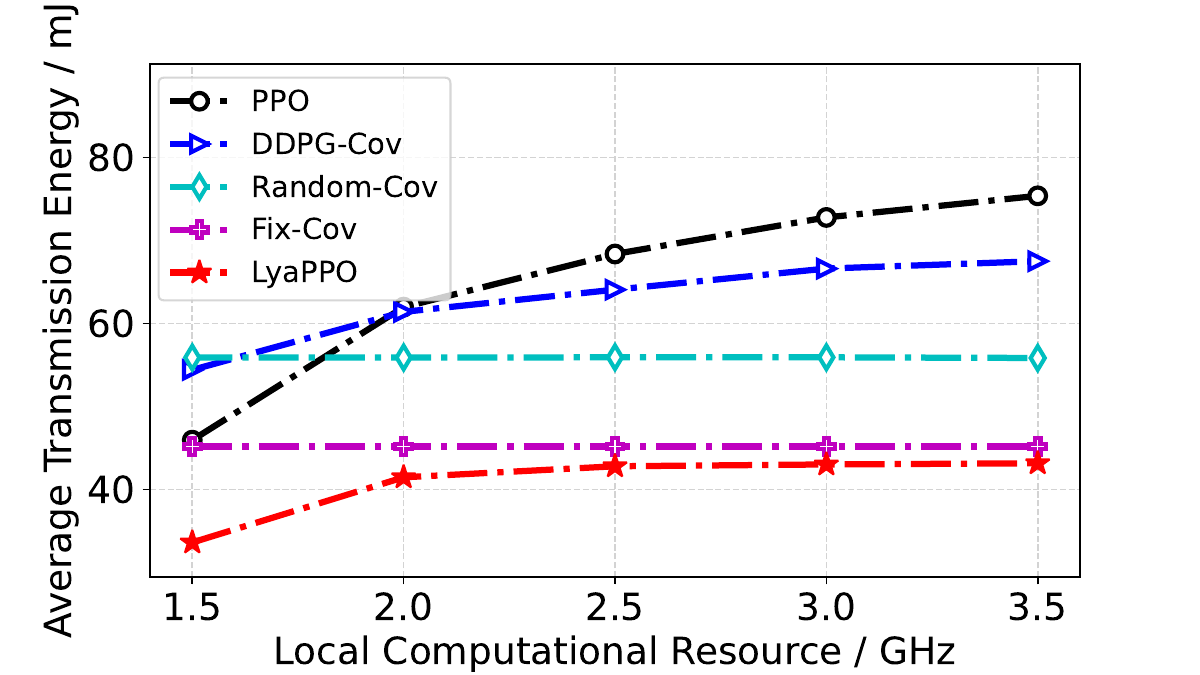}
        \caption{\label{fig:exp22Et}Transmission Energy Consumption}
    \end{subfigure}
    \begin{subfigure}[t]{1\linewidth}
        \centering
        \includegraphics[width=8cm,height=4.3cm]{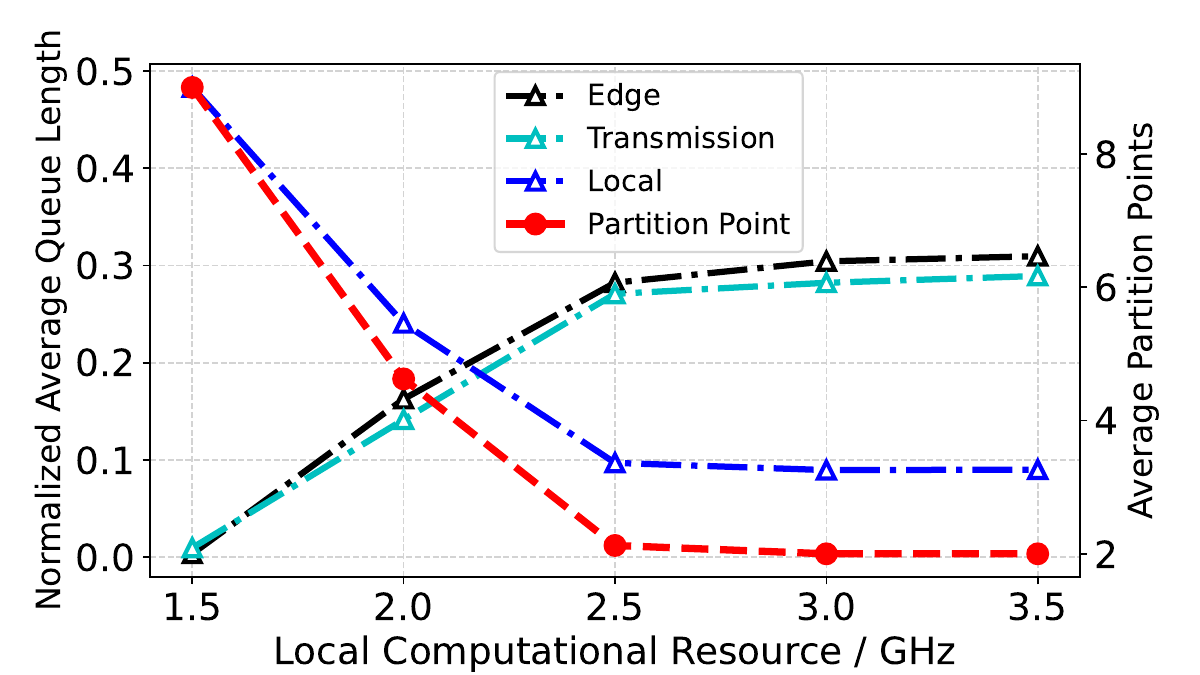}
        \caption{The normalized average queue length and average model partition points of the proposed algorithm\label{fig:exp22qk}   }
    \end{subfigure}
    \caption{Comparison of algorithm performance under varying maximum local computational capacities $F_m^l$ of XR device.\label{fig:exp22}}
\end{figure}

\begin{figure}[ht]
    \centering
    \captionsetup[subfigure]{justification=centering, font=small, skip=2pt} 
    \begin{subfigure}[t]{1\linewidth}
        \centering
        \includegraphics[width=8cm,height=4.3cm]{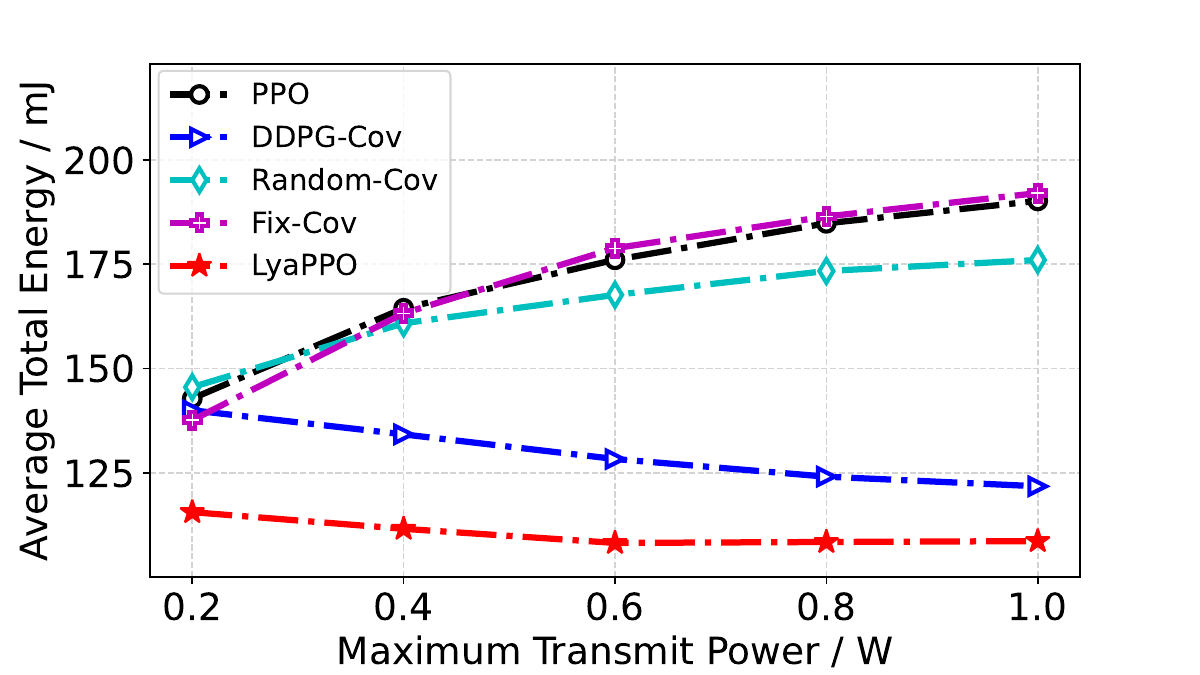}
        \caption{\label{fig:exp23Etotal}Total Energy Consumption}
    \end{subfigure}
    \hfill
    \begin{subfigure}[t]{1\linewidth}
        \centering
        \includegraphics[width=8cm,height=4.3cm]{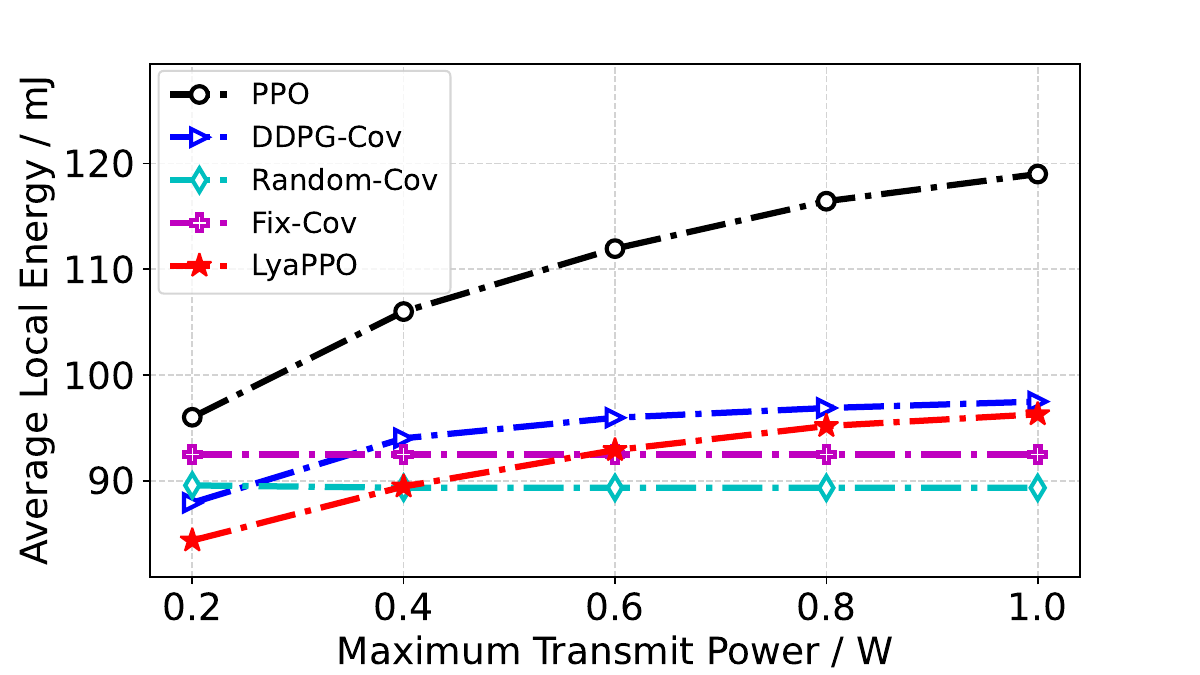}
        \caption{\label{fig:exp23El}Local Energy Consumption}
    \end{subfigure}
    \hfill
    \begin{subfigure}[t]{1\linewidth}
        \centering
        \includegraphics[width=8cm,height=4.3cm]{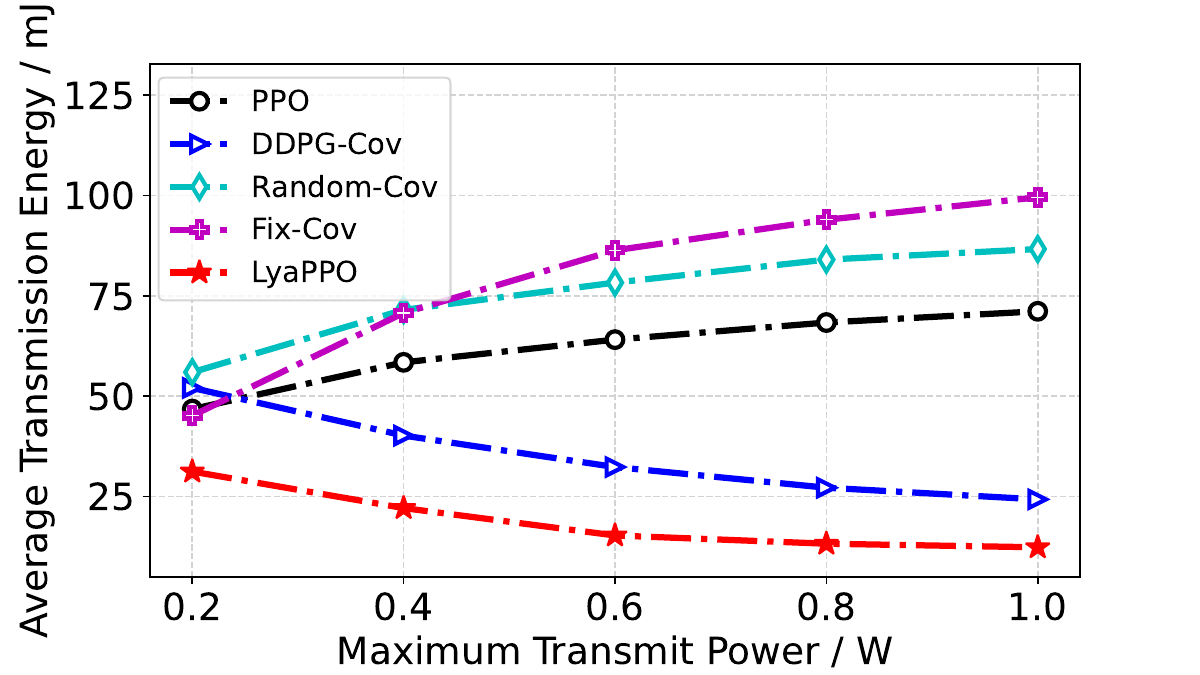}
        \caption{\label{fig:exp23Et}Transmission Energy Consumption}
    \end{subfigure}
    \begin{subfigure}[t]{1\linewidth}
        \centering
        \includegraphics[width=8cm,height=4.3cm]{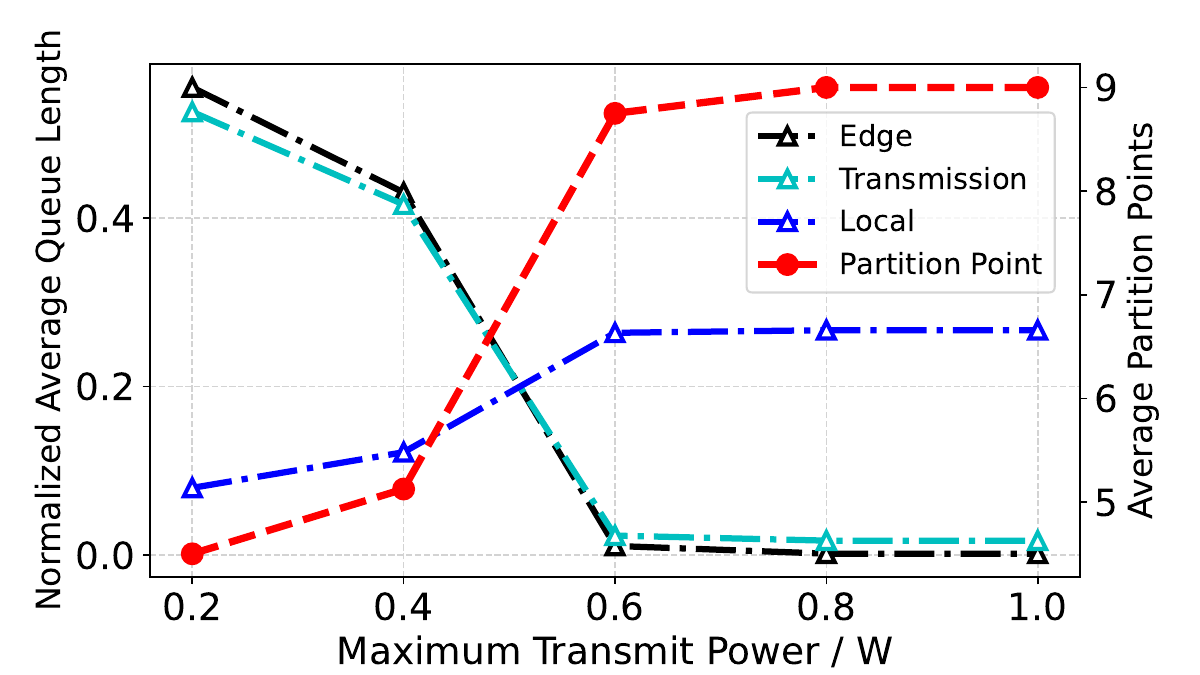}
        \caption{\label{fig:exp23qk}   The normalized average queue length and average model partition points of the proposed algorithm}
    \end{subfigure}
    \caption{Comparison of algorithm performance under varying maximum transmit power $p_m^{max}$ of XR device. \label{fig:exp23}}
   
\end{figure}

\begin{figure}[ht]
    \centering
    \captionsetup[subfigure]{justification=centering, font=small, skip=2pt} 
    \begin{subfigure}[t]{1\linewidth}
        \centering
        \includegraphics[width=8cm,height=4.3cm]{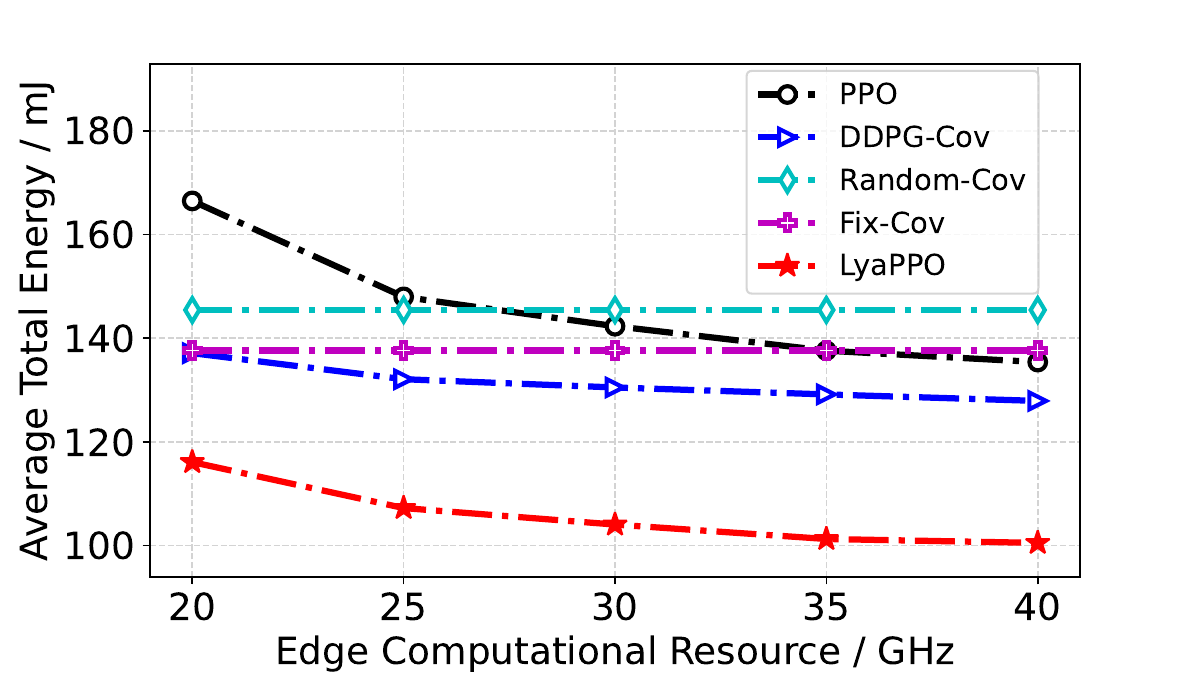}
        \caption{\label{fig:exp21Etotal}Total Energy Consumption}
    \end{subfigure}
    \hfill
    \begin{subfigure}[t]{1\linewidth}
        \centering
        \includegraphics[width=8cm,height=4.3cm]{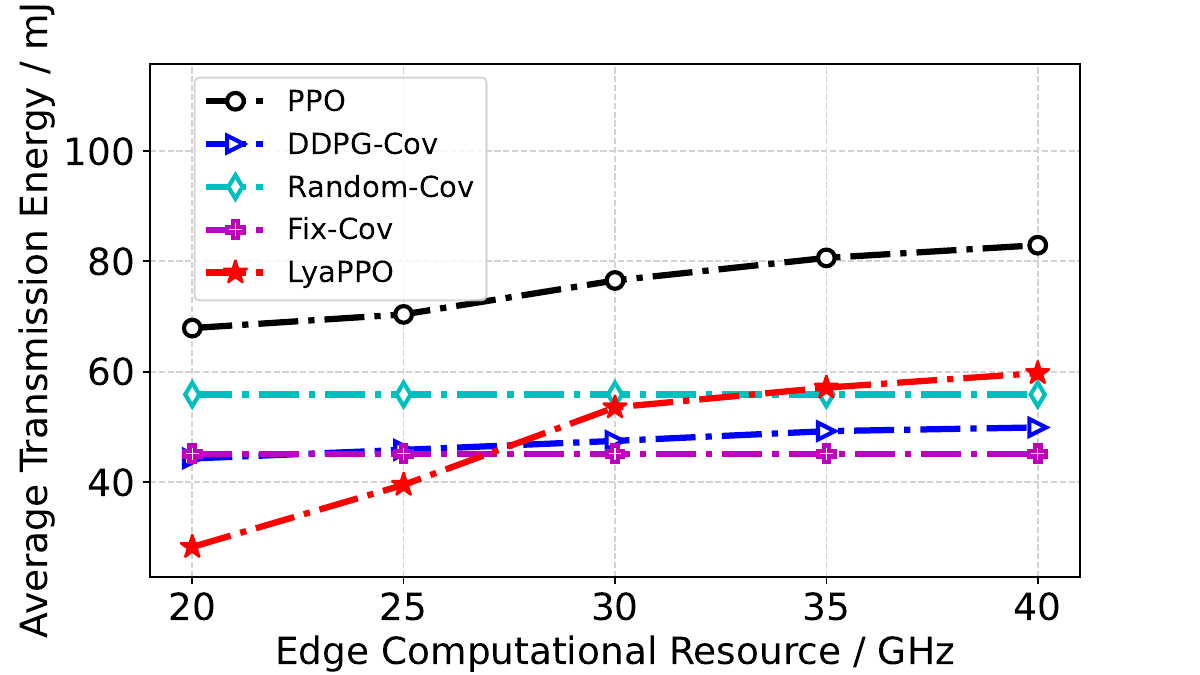}
        \caption{\label{fig:exp21El}Local Energy Consumption}
    \end{subfigure}
    \hfill
    \begin{subfigure}[t]{1\linewidth}
        \centering
        \includegraphics[width=8cm,height=4.3cm]{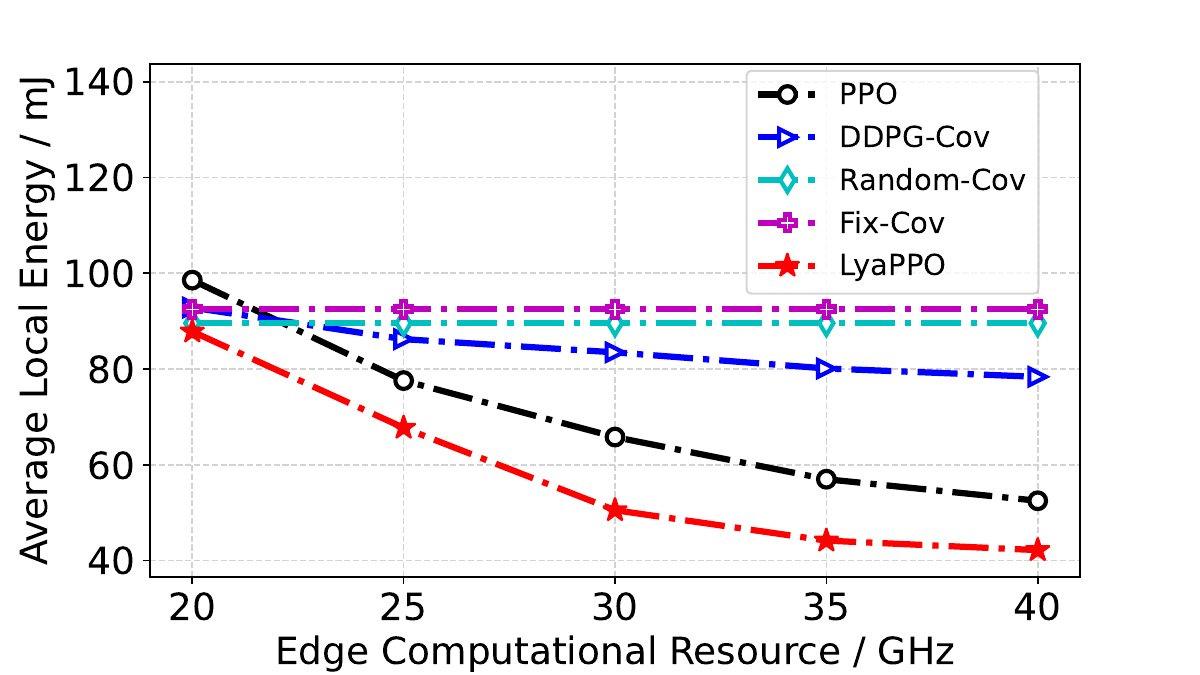}
        \caption{\label{fig:exp21Et}Transmission Energy Consumption}
    \end{subfigure}
    \begin{subfigure}[t]{1\linewidth}
        \centering
        \includegraphics[width=8cm,height=4.3cm]{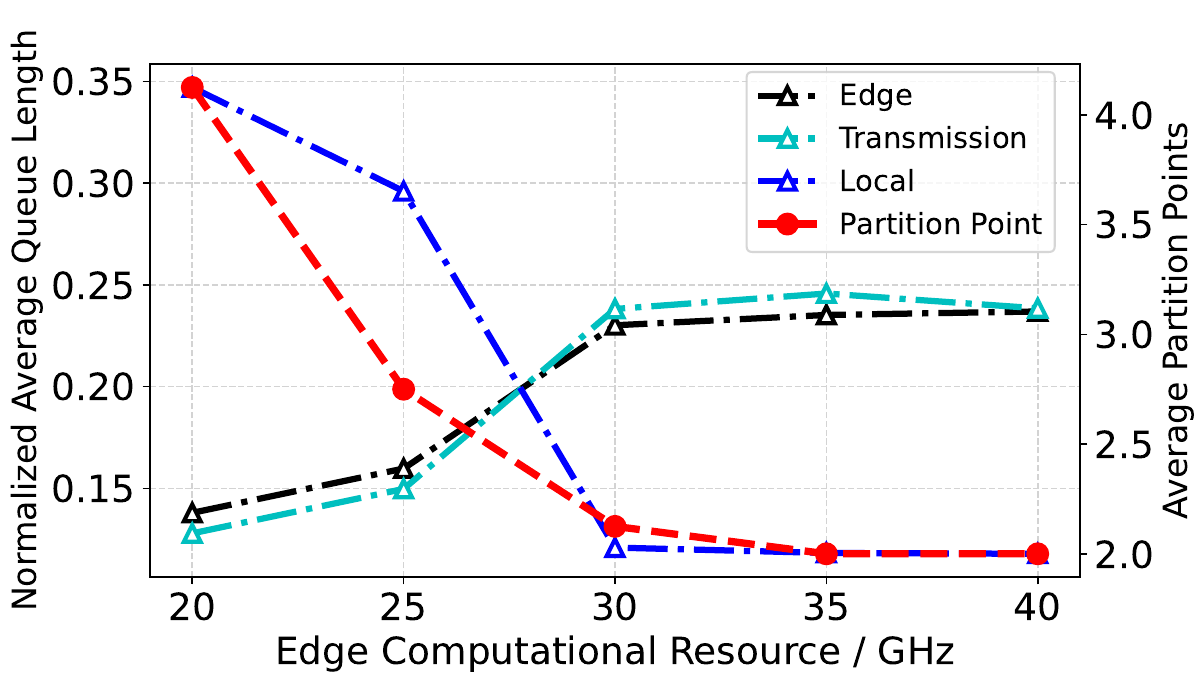}
        \caption{The normalized average queue length and average model partition points of the proposed algorithm\label{fig:exp21qk}}
    \end{subfigure}
    \caption{Comparison of algorithm performance under varying maximum edge computational capacities $F^e$ of MEC server.}
    \label{fig:exp21}
\end{figure}

\subsection{Performance Comparison under Different Resources}
\subsubsection{Impact of the local computational resource}

Fig. \ref{fig:exp22} evaluates the impact of maximum local computational resources on performance. As shown in Fig. \ref{fig:exp22Etotal}, the LyaPPO algorithm achieves a reduction in average total energy consumption by 24.29\%, 56.62\%, 49.82\%, and 53.83\%, compared to the PPO, DDPG-Cov, Random-Cov, and Fix-Cov baselines, respectively. A decreasing trend in total energy consumption can be observed for both the LyaPPO algorithm and the DDPG-Cov baseline. When local computational resources are limited, local computing dominates the process of inference tasks, and the local energy consumption remains acceptable. However, the insufficient local computational resources cannot clear all tasks in the local queues, leading to task backlogs. 

Figs. \ref{fig:exp22El} and \ref{fig:exp22Et} indicate that as the local computational resource capacity increases, the LyaPPO algorithm tends to reduce local computing while increasing transmission, thereby sacrificing part of the transmission energy consumption to achieve greater reductions in local energy consumption. 
Additionally, the adaptive resource allocation capability of XR devices enhances with $F_m^l$, enabling the timely clearing of local queues and reducing average local energy consumption. As shown in Fig. \ref{fig:exp22qk}, the trade-off between local energy consumption and transmission energy consumption is reflected in adjustment of model partition points and queue lengths of distributed queues. As $F_m^l$ increases, partition points gradually shift forward, reducing local queue lengths while transmission and edge queue lengths grow.

\subsubsection{Impact of the maximum transmission power}

Fig. \ref{fig:exp23} evaluates the impact of maximum transmit power on performance. Fig. \ref{fig:exp23Etotal} highlights the superior performance of the LyaPPO algorithm, which reduces average total energy consumption by 14.67\%, 34.78\%, 32.47\%, and 34.48\% compared to the PPO, DDPG-Cov, Random-Cov, and Fix-Cov baselines, respectively. The total energy consumption of the LyaPPO algorithm and the DDPG-Cov baseline decreases with maximum transmit power $p_m^{max}$. 

When the maximum transmit power is low, the maximum transmission rate $R_m(t_s)$ is limited, which results in insufficient allocated transmission rates to clear all tasks in the transmission queues, leading to backlogs. Figs. \ref{fig:exp23El} and \ref{fig:exp23Et} indicate that as the maximum transmit power increases, the LyaPPO algorithm tends to reduce transmission and increase local computing, sacrificing some local energy consumption to conserve greater transmission energy consumption. 

Besides, the XR device can allocate communication resources more adaptively as maximum transmit power increases, enabling timely clearing of transmission queues and reducing average transmission energy consumption. 
As shown in Fig. \ref{fig:exp23qk}, energy trade-off leads to adjustments in model partition points and queue lengths in distributed queues. As $p_m^{max}$ increases, partition points shift back, lengthening local queues while shortening transmission and edge queues.

\subsubsection{Impact of the edge computational resource}

Fig. \ref{fig:exp21} evaluate the impact of edge computational resource on performance. As shown in Fig. \ref{fig:exp21Etotal}, the total energy consumption decreases with edge computational resources for the LyaPPO, DDPG-Cov and PPO algorithms. Compared to the PPO, DDPG-Cov, Random-Cov, and Fix-Cov baselines, the LyaPPO algorithm reduces the average total energy consumption by 19.47\%, 27.36\%, 27.22\%, and 23.09\%, respectively. 
As illustrated in  Figs. \ref{fig:exp21El} and \ref{fig:exp21Et}, when edge computational resource capacity is limited, local energy consumption remains acceptable. This leads to local computing predominantly handling DNN inference tasks. As the edge computational resource capacity increases, the LyaPPO algorithm strategically transmits more inference tasks to MEC server. This approach then effectively optimizes overall energy consumption by sacrificing some transmission energy consumption to save more local energy consumption. Figs. \ref{fig:exp21qk} reflects the impact of energy trade-off, manifested in the adjustments of model partitioning and queue lengths in distributed queues. The local queue shortens, while transmission and edge queues grow, accompanied by a forward shift in the model partition points.

\section{Conclusion} \label{Sec6}
This paper considers the problem of energy optimization for collaborative inference between MEC and XR devices. We propose a multi-task coupled distributed queue model and formulate the problem as a dual time-scale bi-level optimization problem with constraints on computational and communication resources, as well as queue stability. To tackle this problem, we propose a LyaPPO algorithm that updates model partition points in each partition adjustment period using a DRL framework, and adjusts the resource allocation scheme by solving decoupled sub-problems with convex optimization in each time slot. Experiments demonstrate the effectiveness of the LyaPPO algorithm, which outperforms baselines in energy consumption under varying local computational resources, maximum transmit power, and edge computational resources.

\appendices
\section{Proof of Theorem 1}
Based on the KKT conditions \cite{36}, the necessary and sufficient conditions from \eqref{eq:Lagrangian} are as follows
\begin{align}
& \frac{\partial L}{\partial f_{m,n}^{e*}(t_s)}=-\tau Q_{m,n}^e(t_s)+\lambda^*-\mu_{m,n}^*+v_{m,n}^* = 0, \label{kkt1}\\
& \lambda(\sum_{m}\sum_{n}f_{m,n}^{e*}(t_s)-F^e)=0,  \label{kkt2}\\
& \mu_{m,n}^*f_{m,n}^{e*}(t_s) =0 , \mu_{m,n}^* \geq 0, \label{kkt3}\\
& v_{m,n}^*(f_{m,n}^{e*}(t_s)-\frac{Q_{m,n}^{e}(t_{s})}{\tau})=0,v_{m,n}^* \geq 0. \label{kkt4}
\end{align}
where $f_{m,n}^{e*}(t_s)$, $\lambda^*, \mu_{m,n}^*$, and $v_{m,n}^*$ represent  the optimal solution. When $ 0 < f_{m,n}^{e*}(t_s) < \frac{Q_{m,n}^{e}(t_s)}{\tau} $, we can obtain $\mu_{m,n}^* = 0 $ and $ v_{m,n}^* = 0 $ from \eqref{kkt3}-\eqref{kkt4}, leading to $ \lambda^* = \tau Q_{m,n}^e(t_s) $. If $ f_{m,n}^{e*}(t_s) = 0 $, then $ v_{m,n}^* = 0 $ and $ \mu_{m,n}^* \in [0, +\infty) $ based on \eqref{kkt3}-\eqref{kkt4}, implying $ \lambda^* \geq \tau Q_{m,n}^e(t_s) $. Similarly, when $ f_{m,n}^{e*}(t_s) = \frac{Q_{m,n}^{e}(t_s)}{\tau} $, we have $ v_{m,n}^* \in [0, +\infty) $ and $ \mu_{m,n}^* = 0 $, implying $\lambda^* \leq \tau Q_{m,n}^e(t_s)$.

Thus, the optimal solution with respect to $\lambda ^*$ can be expressed as
\begin{align}
f_{m,n}^{e*}(t_{s}) =
\begin{cases}
0, & \text{if } \lambda ^* \ge \tau Q_{m,n}^e(t_s), \\
F^e - \! \! \! \! \displaystyle {\sum_{(m,n)\in \mathcal{\hat{P}}_{m,n}}} \! \! \! \! f_{m',n'}^{e*}(t_{s})  , & \text{if } \lambda ^* = \tau Q_{m,n}^e(t_s), \\
\frac{Q_{m,n}^{e}(t_{s})}{\tau}, & \text{if } \lambda ^* < \tau Q_{m,n}^e(t_s), \\ 
\end{cases} \nonumber
\end{align} 
where $\mathcal{\hat{P}}_{m,n}=\{(m',n')| m' \in \mathcal{M}, n' \in \mathcal{N}_m, (m',n') \neq (m,n) \}$.

When $\lambda ^* >0$, to simplify the optimal solution expression, we sort $\{\tau Q^e_{m,n}\} $ obtaining the ordered sequence $f_1 \ge f_2 \ge \cdots \ge f_{(\sum_{m \in \mathcal{M}}N_m)}$, and the expression of results can be rewritten as
\begin{align}\label{eq:P5_result}
f_{m,n}^{e*}(t_{s}) =
\begin{cases} 
0, & \text{if } k > K, \\
F^e - \sum_{k=1}^{K}f_k , & \text{if } k = K, \\
\frac{Q_{m,n}^{e}(t_{s})}{\tau}, & \text{if } k < K, \\
\end{cases} 
\end{align} 
where $K$ is determined by $\sum_{k=1}^{K}f_k < F^e \le \sum_{k=1}^{K+1}f_k$. When $\lambda ^* =0$, the result expression also matches \eqref{eq:P5_result}, which completes the proof.

\bibliographystyle{IEEEtran}

\bibliography{ref}

\end{document}